\setlist{nolistsep}
\newtheorem{thm}{Theorem}
\newtheorem{lemma}[thm]{Lemma}
\newtheorem{prop}[thm]{Proposition}
\theoremstyle{definition}
\newtheorem{definition}{Definition}[section]
\title{Cross-Lingual Document Retrieval with Smooth Learning}
\author{Jiapeng Liu\thanks{\quad This work was majorly done while the first two authors were pursuing their Ph.D. degrees at Purdue University.} \and Xiao Zhang\footnotemark[1] \\
  Amazon.com \\
  {\tt \{liujiape, zhxao\}@amazon.com} \\
  \AND
  Dan Goldwasser \and Xiao Wang \\
  Purdue University \\
  {\tt \{dgoldwas, wangxiao\}@purdue.edu} \\}
\date{}
\begin{document}

\maketitle
\begin{abstract}
Cross-lingual document search is an information retrieval task in which the queries' language differs from the documents' language. In this paper, we study the instability of neural document search models and propose a novel end-to-end robust framework that achieves improved performance in cross-lingual search with different documents' languages. This framework includes a novel measure of the relevance, \textit{smooth cosine similarity}, between queries and documents, and a novel loss function, \textit{Smooth Ordinal Search Loss}, as the objective. We further provide theoretical guarantee on the generalization error bound for the proposed framework. We conduct experiments to compare our approach with other document search models, and observe significant gains under commonly used ranking metrics on the cross-lingual document retrieval task in a variety of languages.
\end{abstract}

\section{Introduction}
\label{sec:intro}
% \textcolor{red}{(We will gradually build this part. I will start writing it by outlining my thoughts)}
% What is information retrieval? Why is it important? How is the current advances in it?
In modern search engines, cross-lingual information retrieval tasks are becoming prevalent and important. For example, when searching for products on a shopping website in English, immigrants tend to use their native languages to form the queries and would like to see the most desired products which are in English. Another example is in international trading, investors might use English to describe their product to search the sentiment in other languages on online forums from different countries, in order to understand the customers' attitude towards it. Despite that these tasks can be naturally formulated as information retrieval (IR) tasks and resolved by mono-lingual methods, the rising need of cross-lingual IR techniques also requires robust models to deal with queries and documents from different languages.

Early studies on documents retrieval mostly rely on lexical matching, which is error-prone in cross-lingual tasks as vocabularies and language styles usually change across different languages, as well the contextual information is largely lost. With the recent surge of deep neural networks (DNNs), researchers are able to go beyond lexical matching by building neural architectures to represent textual information of query and documents by vector representations via non-linear transformations, which have shown great successes in many applications  \citep{Salakhutdinov2009,Huang2013,Shen2014a,Palangi2016}.
%   is cosine similarity, which is often used after the DNN models and then passed into the objective function. However, the cosine similarity 
 
Despite the success of those aforementioned advances, several existing difficulties have not been well addressed, e.g. \textit{exploding gradients} and convergence guarantee. The most widely used optimization method for training DNNs is stochastic gradient
descent, which updates model parameters by taking gradients of the weights. Meanwhile, cosine similarity, a commonly used measure of relevance between query and documents, is not stable as its gradient may go to infinity when the Euclidean norm of the representation vector is close to zero, leading to \textit{exploding gradients} and resulting in irrational training. Figure \ref{fig:grad} demonstrates the gradients of cosine similarity and our proposed smooth cosine similarity. In addition, in most of the previous works, the loss functions are either heuristically designed or margin-based \citep{rennie2005loss}, resolving to account for the particular property of the model but lacking interpretability and convergence guarantee. These new scenarios pose a new challenge towards the model: the robustness of the cross-lingual information process needs to be well addressed.

\begin{figure*}[!htb]
\centering
\includegraphics[width=0.6\textheight]{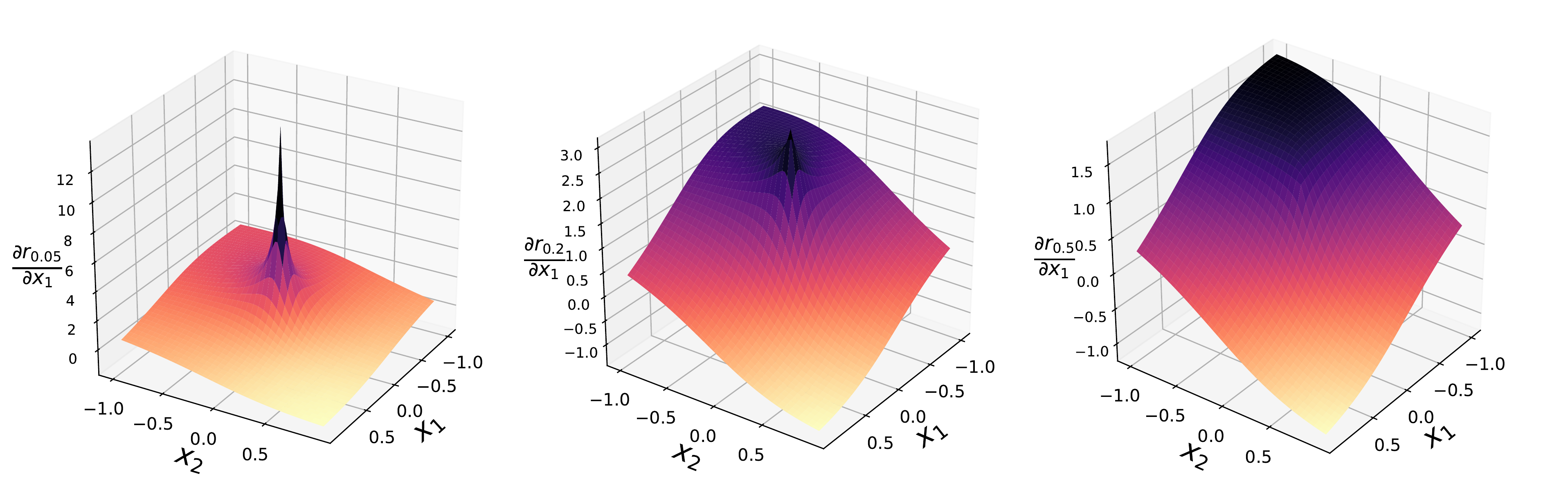}
\caption{This figure visualizes the gradient of proposed smooth cosine similarity with different $\epsilon$. We calculate the smooth cosine similarity $r_\epsilon$ of two vectors, ($x_1, x_2$) and (1, 1). The plots represent the heat-map of the partial derivative with respect to $x_1$, where $\epsilon$ is chosen as 0.05, 0.2, 0.5, from left to right. The plot of non-smooth case with $\epsilon=0$ is similar to the left plot, though the gradient goes to infinity at the center.}
\label{fig:grad}
\end{figure*}

To tackle these issues, we introduce an end-to-end robust framework that achieves high accuracy in the cross-lingual information retrieval task with different document languages. Particularly, for each query $\bm{q}$ from language $A$, we are given a set of documents $\bm{D} = (d_1, d_2, ..., d_m)$ from a different language $B$ and the degrees of \textit{relevance} $\bm{y} \in \mathbbm{N}^m$, where the entries $(y_1, y_2, ..., y_m)$ in $\bm{y}$ typically belong to ordered class $\{1, ..., K\}$. The goal of Learning-to-rank is to return an ordered or unordered, depending on the evaluation metrics, subset of the documents $\bm{D}$ that are more relevant to the query. Most common evaluation metrics, such as the Normalized Discounted Cumulative Gain (NDCG) and Precision, however, are discontinuous and thus cannot be directly optimized. As a result, researchers usually assume an unknown continuous relevance score $r$, where a higher value $r$ means a larger $y$ and higher relevance between the query and the document. To optimize the model, typically a \textit{surrogate} loss function is used, which is easier to optimize using the relevance score $r$. Subsequently a subset of the documents are being chosen by ordering $r$ from high to low, to serve the ranking purpose. 

Our contributions in this paper can be summarized as follows:
\begin{enumerate}[noitemsep, leftmargin=*]
\item First, we propose a novel measure of relevance between queries and documents, \textit{Smooth Cosine Similarity} (SCS), whose gradient is bounded such that \textit{exploding gradients} can be avoided, stabilizing the model training. 
%{\color{red}In addition, differing from regular cosine similarity, SCS takes into account the norm of the vector representations which shows better performance in document retrieval tasks.} 
%
\item Second, we propose a smooth loss function, \textit{Smooth Ordinal Search Loss} (SOSL), and provide theoretical guarantees on the generalization error bound for this proposed framework.   
\item Third, we empirically show significant gains with our approaches over other document search models
under commonly used ranking metrics on the cross-lingual document retrieval task, by conducting experiments in a variety of languages.
\end{enumerate}
% (We propose a formal framework, which is explainable)
% In this study, we cast document search as an ordinal regression problem, built upon deep neural networks, and propose a novel loss function to build the learning system. We also empirically run experiments on the cross-lingual document retrieval task in a variety of target languages, showing our approach outperforms existing methods.

\section{Related Works}
\label{sec:related}
% Recently, the use of deep learning methods for query-document matching and ranking have been successfully applied to a variety of tasks. 
\paragraph{Document Retrieval} 
Researchers have applied machine learning methods to a variety of document retrieval tasks. \citet{deerwester1990indexing} proposed LSI that maps a query and its relevant documents into the same semantic space where they are close by grouping terms appearing in similar contexts into the same cluster. \citet{Salakhutdinov2009} proposed a semantic hashing (SH) method which used non-linear deep neural networks to learn features for information retrieval.

Siamese Neural Networks was first introduced by \citet{bromley1993signature}, where two identical neural architectures receive different types of input vectors (e.g., query and document vectors in information retrieval tasks).  \citet{Huang2013} introduced deep structured semantic models (DSSM) which projected query and document into a common low-dimensional space using feed-forward neural network models, where they chose the cosine similarity between a query vector and a document vector as their relevance score. \citet{Shen2014a} and \citet{Palangi2016} extended the feed-forward structure in \textit{DSSM} to convolutional neural networks (CNN) and recurrent neural networks (RNN) with Long Short-Term Memory (LSTM) cells. Differing from previous works that used click-through data with only two classes,
\citet{Nigam2019} proposed a loss function that differentiates three classes (relevant, partially relevant, and irrelevant) in document search with huge amount of real commercial data.

% Distributed representations, for instance, pre-trained word embeddings on a continuous vector space \citep{Mikolov2013,pennington-etal-2014-glove} have led a surge on the improved performance of many language tasks. Embeddings not only leverage suitable neural architectures to improve the performance, but also help avoid feature extraction using pre-defined and hand-crafted rules by directly encoding the raw text into informational low dimensional dense vectors.

\paragraph{Cross-Lingual Information Retrieval} Traditionally, Cross-Lingual Information Retrieval (CLIR) is conducted in two steps in a pipeline: machine translation followed with monolingual information retrieval \citep{nie2010cross}. However, this approach requires a well-trained translation model and usually suffers from translation ambiguity \citep{zhou2012translation}. The error propagation from machine translation may even deteriorate the retrieval results. As an alternative, pre-trained word embeddings \citep{Mikolov2013,pennington-etal-2014-glove} have led to a surge of improved performance on many language tasks, which learns word representations of different languages on large scale text corpora. Nevertheless, the training objective of these embeddings are different from IR tasks, thus their direct application may be limited. 

% A number of statistical and machine learning methods have been developed to address ordinal regression. Proportional Odds (PO) Model \citep{mccullagh1980regression} is a well-known statistical approach which predicts the cumulative probabilities of the response up to a threshold. \citet{chu2005gaussian} assumed the response have a continuous latent function $f(x)$ which they used Gaussian process to model. \citet{chu2005new} proposed a support vector based ordinal regression model to estimate both the weight $w$ and threshold boundaries $b$ where the thresholds are guaranteed to be properly ordered. \citet{gutierrez2015ordinal} provided a comprehensive review of ordinal regression.

\paragraph{Generalization Error}  There has been previous works on the generalization error of learn-to-rank models. \citet{lan2008query} analyzed the stability of pairwise models and gave query-level generalization error bounds. \citet{lan2009generalization} provided a theoretical framework
for ranking algorithms and proved generalization error bounds for three list-wise losses: \textit{ListMLE}, \textit{ListNet} and \textit{RankCosine}. \citet{chapelle2010gradient} introduced annealing procedure to find the optimal smooth factor on an extension of surrogate loss \textit{SoftRank}, and derived a generally applicable bound on the generalization error of query-level learning-to-rank algorithms. \citet{tewari2015generalization} proved that several loss functions used in learning-to-rank, such as cross-entropy loss, have no degradation in generalization ability as document lists become longer. However, theoretical analysis toward search models with neural architectures is still very limited.

% Two families of methods: Probabilistic approaches and Loss Threshold-based approaches, have been proposed in the efforts to of solving ordinal regression as $(2)$ mentioned above. In the probabilistic approaches, \cite{mccullagh1980regression} introduce Proportional Odds model which we will review more details in next section. \cite{chu2005gaussian} apply Gaussian Process to Ordinal regression. In another line of work, Immediate-threshold \cite{shashua2003ranking} and All-threshold \cite{srebro2005maximum} are among the most popular and most intuitive loss functions. 

% {\color{red}\cite{rennie2007extracting} build the equivalence between Proportional Odds model and a particular immediate-thresholds loss function. However, there is still a gap between most of the heuristic loss functions and probabilistic formulations.}

% Though much efforts have been made to generalize the linear term $r = \bm{\omega}^{\intercal} \bm{x}$, generalization to Deep Neural Networks (DNN) is less appreciate due to the complex loss function, i.e., the negative log-likelihood. 

%\section{Model Formulation}
%\label{sec:pre}
%\input{preliminary}

\section{Smooth Neural Document Retrieval}
\label{sec:method}
In this section, we propose a novel Smooth Cross-Lingual Document Retrieval framework. This framework consists of three parts. First, we use neural models to encode queries and documents from different languages, and represent them by low-dimensional vectors. Second, we propose a smooth cosine similarity to indicate the relevance score $r$, which avoids gradient explosion and therefore stabilizes the training process. Finally, we introduce Smooth Ordinal Search Loss for optimizing $r$.  

% \subsection{Cross-lingual Semantic Search}
% Cross-lingual semantic search tries to resolve the document search problem while the query and the documents are from different language sources. Given $\bm{q}$ as a query from language $A$ and $\bm{d}$ as a document from language $B$, the semantic search model calculates $r$ as the relevance score between $\bm{q}$ and $\bm{d}$. A higher $r$ means the query $\bm{q}$ and the document $\bm{d}$ are highly related semantically. Usually in the real-world data, the relevance score $r$ is unknown and instead a rating $y$ belongs to ordered classes $\{1, ..., K\}$, as levels of relevance, are provided as responses. The search model tries to learn a mapping $f: \bm{q}, \bm{d} \rightarrow y$.
% % We denote $f_q: X \to \mathbbm{R}^N$ and $f_d: Y \to \mathbbm{R}^N$ as the representation function which represent texts from $X$ and $Y$ by $N$-dimensional vectors, respectively. We fix the number of ordered classes as $K$ throughout this paper and assume for each query-document pair $(q, d)$, we observe their level of relevance $y \in \{1, ..., K\}$. 
 
% % Components of the loss functions that we will discuss in this paper contain a family of margin penalty function, where $f(\cdot): \mathbbm{R} \to \mathbbm{R}^+ \cup \{0\}$. A widely known example of penalty function is hinge loss where $f(x) = \max(0, -x)$.
\subsection{Text Representation}

% We transform the query and the document into two different representations and calculate the cosine similarity as their relevance score, following previous document search approaches \cite{Nigam2019}.
We use embeddings as low dimensional dense vectors to represent both queries and documents.
% Different from previous models that assume pre-defined or hand-crafted features of the query and documents used in learning-to-rank models, in this paper, we focus on directly extracting features from raw texts. 
Differing from mono-lingual tasks, in cross-lingual document retrieval one can rarely observe common tokens between queries and documents. Therefore, cross-lingual document retrieval usually requires embeddings built from different vocabularies as queries and documents are from two different languages \citep{Sasaki2018}. This requirement naturally expands the size of parameters of the model if we regard embeddings as part of model parameters and fine-tune them during training, thus the retrieval model itself demands a higher stability.
% Besides, the significant difference between cross-lingual and mono-lingual tasks is that one rarely see common tokens between query and document in former tasks. Therefore, it is common to use different vocabularies and embedding for different languages \cite{Sasaki2018}. 

Queries and documents are represented numerically. A query $\bm{q}$ is tokenized into a list of words $\bm{q}=q_1q_2...q_{l_q}$ of length $l_q$. For example, the tokenization of ``Apple is a fruit'' is [``Apple'', ``is'', ``a'', ``fruit''] of length 4. Similarly, $\bm{d}=d_1d_2...d_{l_d}$ is an expression of a document of length $l_d$ from document language. For simplicity, we use A and B to represent query language and document language, respectively. We also choose the same dimension $p$ for both word embeddings $Emb_A$ and $Emb_B$. Therefore, we encode the query $\bm{q}$ and the document $\bm{d}$ by embedding representations $Q \in \mathbbm{R}^{p \times l_q}$ and $D \in \mathbbm{R}^{p \times l_d}$, where the \textit{i}th column of $Q$ is the word embedding from $Emb_A$ for the token in the \textit{i}th position in the query, and the \textit{j}th column of $D$ is the word embedding from $Emb_B$ for the token in the \textit{j}th position in the document. 

\subsection{Neural Model Architecture}

With the embedding representations $Q$ and $D$, which can be of different sizes due to different number of tokens in the query and the document, we apply neural models onto them to obtain vectors of the same size. Carefully designed neural models are able to perform dimension reduction regarding the sequence length, and thus project the raw texts into the same Euclidean space of the same dimension for both the queries and the documents, regardless of the number of tokens in them. Thus, we can quantify the relevance between a query and a document as they are in the same space by calculating standard metrics, such as cosine similarity. For example, if the embedding size is $p$, then a query of length $l_q$ can be represented by $Q \in \mathbbm{R}^{p \times l_q}$, and the representations of two documents of length $l_{d1}$ and $l_{d2}$ are $D_1 \in \mathbbm{R}^{ p \times l_{d1}}$ and $D_2 \in \mathbbm{R}^{p \times l_{d2}}$, respectively. In such case, $Q$, $D_1$ and $D_2$ are in different space. To mitigate this difficulty that identifies quantitative comparison of the relevance of two pairs, $(Q, D_1)$ and $(Q, D_2)$, a well designed neural model is used to transform both $Q$ and $D$ into $\mathbbm{R}^{p}$.

In order to achieve this goal, we propose to use average pooling over the columns of $Q$ and $D$, followed by a non-linear activation function \textit{tanh} for both query and document models. For a query $\bm{q}$, the query model $f_q: Q \rightarrow \bm{v}_q \in \mathbbm{R}^p$ takes embedding $Q$ as input and outputs a final representation $\bm{v}_q$. Similarly, the document model $f_d: D \rightarrow \bm{v}_d \in \mathbbm{R}^p$ also generates a final representation $\bm{v}_d$ into the same space using $D$. There are other modeling choices such as LSTM and CNN. We compare the performance of different neural models in the experiments.

The benefits of this model are in two folds: on one hand, despite non-linearity, the models are smooth and Lipschitz continuous with respect to embedding parameters, and therefore benefits from convergence of generalization error during training; on the other hand, using average pooling avoids extra parameters in the model, thus simplifies the model space and reduces tuning efforts during training, in accordance to the findings in \citet{Nigam2019}'s work.

\subsection{Smooth Cosine Similarity}
Cosine similarity has been widely used to find the relevance between queries and documents in information retrieval. Given two vectors $\bm{x}$ and $\bm{z}$ of same size in Euclidean space, cosine similarity $\textit{cos}(\bm{x}, \bm{z}) = \frac{\bm{x}^{\intercal}\bm{z}}{\|\bm{x}\|\|\bm{z}\|}$ measures how similar these two vectors are irrespective of their norms, i.e. $\|\bm{x}\|$ and $\|\bm{z}\|$. However, the norms of the vectors play crucial roles when calculating the gradient. More specifically, the gradient goes to infinity if the norms are close to zero, and results in unstable weights update during training. This phenomenon is also known as \textit{exploding gradients}. The intuition is that for a vector of small norm, a slight disturbance can greatly change the angle between itself and another vector, i.e. the cosine similarity. The use of cosine similarity can lead to {exploding gradients} regardless of the model structure when gradient descent methods are used for optimization. Recently, most semantic matching models and learning-to-rank models are constructed based on neural architectures. Thus, these retrieval models suffer from this issue greatly since commonly they are optimized by gradient descent methods.

To increase the stability of model training, we further propose \textit{Smooth Cosine Similarity} (SCS) in replace of the regular cosine similarity. We define the SCS between a query and a document as 
\begin{align}
    r_\epsilon(\bm{q}, \bm{d}) = \frac{f_q(\bm{q})^T f_d(\bm{d})}{(\|f_q(\bm{q})\|+\epsilon)(\|f_d(\bm{d})\|+\epsilon)},
\end{align}
where $\|\cdot\|$ is the Euclidean norm and $\epsilon > 0$. Under the framework of SCS, the gradient of $r_\epsilon$ with respect to $f_q(\bm{q})$ and $f_d(\bm{d})$ is upper bounded in the whole space and thus stabilizes training procedure. Moreover, by introducing this additional smoothness hyper-parameter into the norm of the feature representation vectors, the similarity score not only measures the angle between vectors, but also adds information about the norm of the vectors. As a result, SCS is not order-preserving from cosine similarity, i.e., $cos(\bm{x}, \bm{z}_1) > cos(\bm{x}, \bm{z}_2)$ does not necessarily imply $r_\epsilon(\bm{x}, \bm{z}_1) > r_\epsilon(\bm{x}, \bm{z}_2)$. 
The choice of the hyper-parameter $\epsilon$ is flexible and not sensitive to the model performance, which is further analyzed in the experiments.

% {\color{red}Another common method for avoiding exploding gradients is \textit{gradient clipping}, i.e., clipping gradients if their norm exceeds a given threshold.} In experiments, we observe little difference with or without \textit{gradient clipping} using cosine similarity in all metrics. Another difference is that SCS does not loss any information about gradients as nothing is dropped or clipped. 
Another common method to avoid exploding gradients is \textit{gradient clipping}, i.e., clipping gradients if their norm exceeds a given threshold. Our proposed SCS does not exclude gradient clipping and in fact they complement with each other. In our pilot experiments, we observe merely using gradient clipping is not sufficient in our cross-lingual document retrieval setup. By adding SCS, we observe improved performance over using gradient clipping alone.

\subsection{Smooth Ordinal Search Loss}
In a search ranking model, it is critical to define a \textit{surrogate} loss function since the ranking metrics, such as \textit{NDCG} and \textit{Precision} are not continuous therefore difficult to optimize. On the choice of a \textit{proper} loss function, one has to consider two criteria: first, minimizing the \textit{surrogate} loss on training set should imply a small \textit{surrogate} loss on test set; second, a small \textit{surrogate} loss on test set should imply desired ranking metrics results on test set  \citep{chapelle2011future}. To deal with the first criterion, we formulate the search ranking model as an ordinal regression problem. For the second one, we propose to use \textit{Smooth Ordinal Search Loss} (SOSL) as the \textit{surrogate} loss.

Recall that a pair of query and document has an ordered relevance level, and the goal of a search ranking model is to select a subset of documents such that more relevant documents are ranked on the top while less relevant documents are ranked lower. Taking three-class ranking problem as a concrete example, the pairs can be grouped into \textit{relevant, partially relevant} and \textit{irrelevant}. If a mis-ranking has to exist for a \textit{relevant} pair, it is preferred to rank a \textit{partially relevant} document over an \textit{irrelevant} document, which means not all mistakes are equal. 

% With the notion that the relevance in search measures the closeness of a query and a document, we can naturally cast it as an ordinal regression problem, as the relevance represents the preference of a pair of query and document over another pair. Based on the relevance score, query and document pairs can be grouped into ordered classes such as \textit{relevant}, \textit{partially relevant} and \textit{irrelevant} by thresholding arbitrarily.
%

\begin{figure}[!htb]
\centering
\includegraphics[width=0.55\textheight]{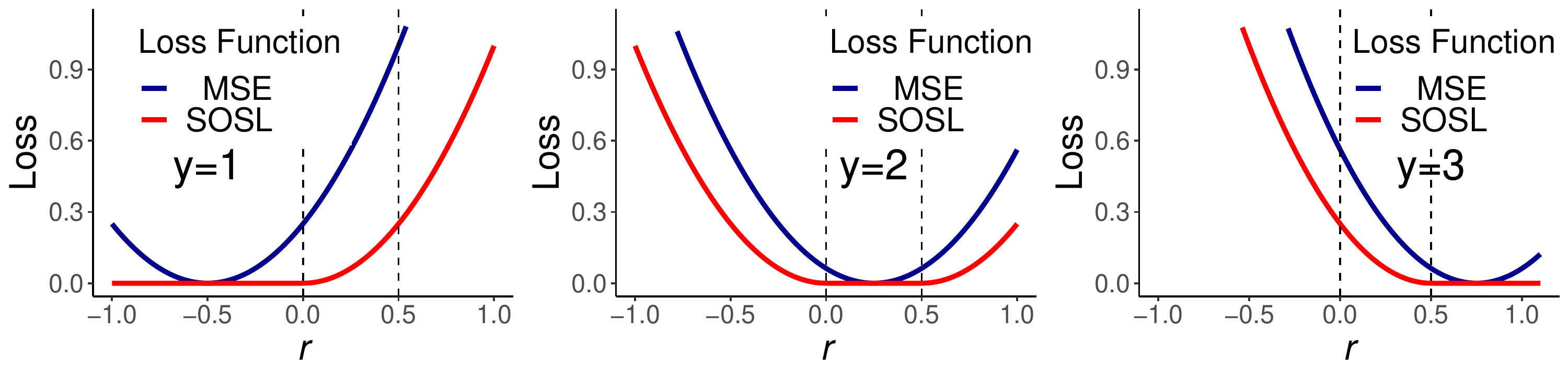}
\caption{These figures illustrate the difference between SOSL and MSE for a three-class $\{1, 2, 3\}$ ordinal regression task where the true label $y$ varies from 1 to 3, with $(\theta_1, \theta_2) = (0, 0.5)$.}
\label{fig:loss}
\end{figure}

Inspired by the immediate threshold with margin penalty function construction \citep{rennie2005loss}, we propose our loss function \textit{Smooth Ordinal Search Loss} (SOSL) formally as following:
\begin{align}
    SOSL(r_\epsilon, y) &= (\theta_y - r_\epsilon)^2\mathbbm{I}[r_\epsilon > \theta_y] 
    + (r_\epsilon - \theta_{y-1})^2\mathbbm{I}[r_\epsilon < \theta_{y-1}],
    \label{eq:osl}
\end{align} 
where $r_\epsilon$ is the smooth relevance score between a query $\bm{q}$ and a document $\bm{d}$, $y \in \{1, ..., K\}$ is the ordered class label denoting the general relevance degree, and $-1 = \theta_0 < ... < \theta_K = 1$ are the thresholds and $\mathbbm{I}$ is the indicator function. Differing from that of the margin penalty function
used in \citet{rennie2005loss}'s work, we choose smooth function $(\cdot)^2$. The thresholds are within the range $[-1, 1]$ in our setup, instead of the whole real line, due to the property of smooth cosine similarity.

The interpretation of this loss function is intuitive, as shown in Figure \ref{fig:loss}: if the relevance score falls into the correct segmentation (the true ordered class of a particular pair of query and document is $y'$ and $\theta_{y'-1} \leq r' \leq \theta_{y'}$), then the loss is $0$; otherwise the loss is the degree of the relevance violating the threshold.

\section{Theoretical Analysis}
\label{sec:theory}
Generalization error measures the difference between training error and testing error. Typically, if the generalization error of a model is bounded and converges to zero, then minimizing empirical loss on training set implies that the expected loss on unseen testing set is also minimized. Although previous works gave generalization bounds for \textit{surrogate} losses in learning-to-rank model \citep{lan2008query, lan2009generalization, tewari2015generalization}, to the best of our knowledge, no theoretical result has been derived in search models with \textit{neural architecture}. Here, we prove a generalization error bound for the commonly used SGD procedure. This error bound suggests that the generalization gap at any training step $T$ converges to zero when the number of training pairs of query and document goes to infinity. We show the detailed proof in the appendix. %{\color{red}All proofs can be found in Appendix. }

The following Proposition \ref{prop:3} and Lemma \ref{lemma:smooth} show that SOSL is both smooth and Lipschitz continuous with respect to not only relevance score $r$ but also embedding parameters.  
\begin{prop}\label{prop:3}
SOSL is smooth and Lipschitz continuous with respect to $r$.
\end{prop}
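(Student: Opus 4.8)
The plan is to treat $SOSL(\cdot, y)$ for a fixed label $y$ and fixed thresholds $\theta_{y-1} < \theta_y$ as a one–dimensional piecewise–quadratic function of $r$ and to establish both regularity properties by a short case analysis on the three intervals cut out by the breakpoints $\theta_{y-1}$ and $\theta_y$. First I would rewrite the definition explicitly as
\begin{align*}
SOSL(r, y) = \begin{cases} (r - \theta_{y-1})^2, & r < \theta_{y-1}, \\ 0, & \theta_{y-1} \le r \le \theta_y, \\ (r - \theta_y)^2, & r > \theta_y, \end{cases}
\end{align*}
and differentiate on each open piece to obtain $\partial SOSL/\partial r = 2(r - \theta_{y-1})\mathbbm{I}[r < \theta_{y-1}] + 2(r - \theta_y)\mathbbm{I}[r > \theta_y]$. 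The one genuinely non-routine point is checking that the two one-sided derivatives agree at each breakpoint: at $r = \theta_{y-1}$ the left derivative is $2(\theta_{y-1}-\theta_{y-1}) = 0$ and the right derivative is $0$ (the middle piece is flat), and symmetrically at $r = \theta_y$. Hence $SOSL(\cdot, y)$ is continuously differentiable on all of $\mathbbm{R}$; everything else is bookkeeping.

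For smoothness, i.e. a Lipschitz gradient, I would observe that $\partial SOSL/\partial r$ is itself continuous and piecewise affine with slope in $\{0, 2\}$, so $|\partial_r SOSL(r_1, y) - \partial_r SOSL(r_2, y)| \le 2|r_1 - r_2|$ for all $r_1, r_2$; equivalently the weak second derivative is bounded by $2$, giving $2$-smoothness. For plain Lipschitz continuity of $SOSL$ in $r$, I would restrict to the domain that actually matters, namely the range of the smooth cosine similarity $r_\epsilon$, which lies in $[-1,1]$ by Cauchy–Schwarz since $|f_q(\bm{q})^T f_d(\bm{d})| \le \|f_q(\bm{q})\|\|f_d(\bm{d})\| \le (\|f_q(\bm{q})\|+\epsilon)(\|f_d(\bm{d})\|+\epsilon)$, together with $\theta_{y-1}, \theta_y \in [-1,1]$ by construction. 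On this compact interval $|\partial_r SOSL(r, y)| \le 2\max\{|r - \theta_{y-1}|, |r - \theta_y|\} \le 4$, so $SOSL(\cdot, y)$ is $4$-Lipschitz there. I would note explicitly that $SOSL$ is \emph{not} globally Lipschitz on $\mathbbm{R}$ (it grows quadratically in $r$), so the bound $[-1,1]$ inherited from SCS is precisely what makes the Lipschitz claim meaningful.

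Since the label set $\{1, \dots, K\}$ is finite, all the constants above can be taken uniform over $y$, which finishes the argument. The main obstacle, such as it is, is conceptual rather than computational: one must read ``smooth'' as ``having a globally Lipschitz derivative'' rather than $C^\infty$, because the second derivative of $SOSL$ jumps by $2$ at each threshold, so $SOSL$ is $C^1$ but not $C^2$; the correct formalization of the proposition is therefore $C^1$ with a $2$-Lipschitz gradient (and $4$-Lipschitz on $[-1,1]$).
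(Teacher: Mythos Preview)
Your proposal is correct and follows essentially the same approach as the paper: both arguments write $SOSL$ as a piecewise quadratic, bound the first derivative by $4$ on $[-1,1]$ using $\theta_{y-1},\theta_y\in[-1,1]$, and bound the (weak) second derivative by $2$ to obtain smoothness. Your treatment is in fact slightly more careful than the paper's, since you verify $C^1$ continuity at the breakpoints and make explicit that the Lipschitz bound requires the restriction $r\in[-1,1]$ inherited from the smooth cosine similarity.
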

\begin{lemma}\label{lemma:smooth}
Let $l(r,y)$ be a smooth and Lipschitz continuous loss function with respect to $r$, then in our search ranking model, $l(r_\epsilon(\bm{q}, \bm{d}), y) = l(f_q, f_d, y)$ is also smooth and Lipschitz continuous with respect to model parameters, i.e., embeddings.
\end{lemma}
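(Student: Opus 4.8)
The plan is to prove the statement by the chain rule, peeling the composition $l(r_\epsilon(\bm q,\bm d),y)$ into two layers — the map $(f_q,f_d)\mapsto r_\epsilon$ and the encoders (embeddings) $\mapsto (f_q,f_d)$ — and showing that each layer is bounded with globally bounded first and second derivatives, so that the overall composition inherits both properties. Concretely, I would first record the elementary composition fact to be invoked twice: if $h$ is $C_h$-Lipschitz and $L_h$-smooth (i.e.\ has an $L_h$-Lipschitz gradient), and $g$ is $C_g$-Lipschitz and $L_g$-smooth on a set containing the range of $h$, then $g\circ h$ is $(C_gC_h)$-Lipschitz and $(C_gL_h+L_gC_h^2)$-smooth. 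This follows from $\nabla(g\circ h)=(\nabla h)(\nabla g\circ h)$ and one more application of the product rule; the only content of the lemma is then to supply finite constants for the two inner layers.

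For the layer $(f_q,f_d)\mapsto r_\epsilon$: writing $\bm u=f_q(\bm q)$, $\bm w=f_d(\bm d)$ so that $r_\epsilon=\bm u^\intercal\bm w\big/\big((\norm{\bm u}+\epsilon)(\norm{\bm w}+\epsilon)\big)$, the key point — and the reason $\epsilon$ was introduced — is that the denominator is bounded below by $\epsilon^2>0$ everywhere, while the $\tanh$ output layer forces $\norm{\bm u},\norm{\bm w}\le\sqrt p$, so the denominator lies in the fixed compact interval $[\epsilon^2,(\sqrt p+\epsilon)^2]$ and $|r_\epsilon|\le 1$. Differentiating once and twice and using $|\bm u^\intercal\bm w|\le\norm{\bm u}\norm{\bm w}$ together with $\|\nabla_{\bm u}\norm{\bm u}\|\le 1$, every term of $\nabla r_\epsilon$ and $\nabla^2 r_\epsilon$ is a ratio whose numerator is polynomially bounded in $\sqrt p$ and whose denominator is a positive power of the bounded-below quantity above; reading off $C_r$ and $L_r$ as functions of $\epsilon$ and $p$ completes this layer. (One restricts to the open dense region where $\bm u,\bm w\neq\bm 0$, on which $r_\epsilon$ is $C^2$; the bounds extend by continuity, and this is the regime relevant to the SGD updates analyzed later.)

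For the layer embeddings $\mapsto (f_q,f_d)$: average pooling over the $l_q$ columns of $Q\in\mathbbm R^{p\times l_q}$ is linear with operator norm $\le 1$, and $\tanh$ acts coordinate-wise with $|\tanh'|\le 1$ and $\sup_t|\tanh''(t)|<\infty$; hence the Jacobian of $f_q$ (resp.\ $f_d$) has operator norm $\le 1$ and its Hessian is uniformly bounded with respect to the embedding entries, \emph{independently} of the values of those entries — this is the crucial structural feature of the chosen architecture, and it would fail for an unbounded activation. Plugging the two layers into the composition fact gives that embeddings $\mapsto r_\epsilon$ is Lipschitz and smooth with explicit constants, and a final application, now with $g=l(\cdot,y)$ which is Lipschitz and smooth on $[-1,1]\supseteq\mathrm{range}(r_\epsilon)$ by hypothesis (e.g.\ Proposition~\ref{prop:3} for SOSL), yields the claim.

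I expect the main obstacle to be organizational rather than conceptual: expanding the second derivative of $r_\epsilon$ cleanly enough that one can exhibit, term by term, a bound depending only on $\epsilon$ and $p$, and then tracking how the Lipschitz and smoothness constants propagate through the double composition via the $C_gL_h+L_gC_h^2$ pattern. A secondary point to handle with care is that the constants for $l$ must be taken uniform over the compact range $[-1,1]$ of $r_\epsilon$ rather than over all of $\mathbb R$; since $l$ is assumed smooth and Lipschitz with respect to $r$, this causes no difficulty.
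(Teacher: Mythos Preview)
Your proposal is correct and follows essentially the same route as the paper: decompose $l \circ r_\epsilon \circ (f_q,f_d)$ into layers, establish Lipschitz and smoothness bounds for each (bounding $\nabla r_\epsilon$ and $\nabla^2 r_\epsilon$ via the $\epsilon$-regularized denominator, and handling the encoder via linearity of average pooling plus bounded derivatives of $\tanh$), then invoke the Lipschitz/smoothness composition lemmas. Your composition constant $C_gL_h + L_gC_h^2$ is in fact the sharp one --- the paper's Lemma~A.2 records $L_1\beta_2 + L_2\beta_1$, dropping a square --- and your explicit remark on the $\bm u=\bm 0$ non-differentiability is more careful than the paper, but neither point alters the argument.
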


From Lemma \ref{lemma:smooth}, we can assume $l(f_q, f_d, y)$ to be $\alpha$-Lipschitz and $\gamma$-smooth.  Next, suppose $S=((\bm{q}_1,\bm{d}_1,y_1), ..., (\bm{q}_n, \bm{d}_n, y_n))$ is a training set sampled from the data distribution with sample size $n$, and assume unseen testing data is from distribution $E$. Let $f = (f_q, f_d)$ be the neural models for queries and documents. By defining $L_S^l(f)$ as the mean training error and $L_E^l(f)$ the expected error on testing set, the following Theorem establishes the upper bound for generalization error $L^l_E(f) - L^l_S(f)$.

\begin{thm}(Generalization Error bound) \label{thm}
Let $l$ be a smooth loss function bounded by $M$. Suppose we run SGD for $T steps$ (with step size $\alpha_t = c/t, t \in \{1,2,\dots, T\}$), with probability at least $1- \delta$ over the draw of $S$, 
\begin{align*}
    & L^l_E(f) - L^l_S(f)
     \leq 2\beta(n) + (4n\beta(n) + M) \sqrt{\frac{\ln 1/ \delta}{2n}},
\end{align*}
where $\beta(n) \lessapprox \frac{T^{1-1/\gamma c}}{n}$.
\end{thm}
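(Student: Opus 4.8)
The plan is to control the generalization gap through \emph{algorithmic stability}, following the standard two-stage route: first show that SGD on our (non-convex) objective is uniformly stable with the claimed rate, and then convert uniform stability into a high-probability bound on $L^l_E(f)-L^l_S(f)$. Recall that the learning map $S \mapsto A(S)$ is $\beta(n)$-uniformly stable if, for any two size-$n$ training sets $S,S'$ differing in a single example and any test point $(\bm{q},\bm{d},y)$, one has $\mathbb{E}_A\,[\,l(A(S);\bm{q},\bm{d},y) - l(A(S');\bm{q},\bm{d},y)\,] \le \beta(n)$, the expectation taken over the algorithm's internal randomness (the SGD example ordering). These two stages are logically independent, so I would establish them separately.

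For the stability estimate I would first invoke Proposition~\ref{prop:3} and Lemma~\ref{lemma:smooth} to reduce to a per-example loss $l(f_q,f_d,y)$ that is $\alpha$-Lipschitz and $\gamma$-smooth in the embedding parameters, and then run the coupling argument of Hardt, Recht and Singer. The geometric ingredient is that a gradient step $w \mapsto w - \alpha_t \nabla l(w;\cdot)$ on a $\gamma$-smooth function is $(1+\alpha_t\gamma)$-expansive --- we cannot do better, since the neural encoders make the objective non-convex and non-expansiveness is unavailable --- while two gradient steps taken on \emph{different} examples differ in norm by at most $2\alpha_t\alpha$ because gradients are bounded by $\alpha$. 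Coupling the two SGD trajectories on $S$ and $S'$ to sample the same index whenever possible, the parameter divergence $\delta_t$ obeys $\mathbb{E}[\delta_{t+1}] \le (1+\alpha_t\gamma)\,\mathbb{E}[\delta_t] + \tfrac{2\alpha_t\alpha}{n}$, since the two runs disagree on the sampled example only with probability $1/n$. I would then condition on a burn-in time $t_0$: for the first $t_0$ steps the loss difference is bounded trivially by $M$ (the event that the differing example has already been used up to time $t_0$ having probability at most $t_0/n$), while for $t\in[t_0,T]$ I would unroll the recursion with $\alpha_t=c/t$, using $1+x\le e^x$ to get an expansion factor $\lesssim (T/t_0)^{\gamma c}$; transferring the divergence at step $T$ back to a loss difference via $\alpha$-Lipschitzness and then optimizing over $t_0$ balances the $\frac{M t_0}{n}$ and $\frac{(T/t_0)^{\gamma c}}{n}$ contributions, yielding $\beta(n) \lessapprox T^{1-1/\gamma c}/n$.

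For the second stage I would use the classical argument (Bousquet--Elisseeff) that promotes uniform stability to a high-probability bound: viewing $L^l_E(f)-L^l_S(f)$ as a function of the $n$ i.i.d.\ training pairs, $\beta(n)$-stability together with the bound $l\le M$ shows that replacing one pair perturbs this function by at most $2\beta(n)+M/n$, while its expectation over the draw of $S$ is at most $\beta(n)$; McDiarmid's bounded-difference inequality then yields $L^l_E(f)-L^l_S(f) \le 2\beta(n) + (4n\beta(n)+M)\sqrt{\ln(1/\delta)/(2n)}$ with probability at least $1-\delta$, and substituting the rate for $\beta(n)$ completes the proof. I expect the stability estimate to be the crux: non-convexity of the neural objective rules out non-expansive updates, so the entire bound hinges on the decaying schedule $\alpha_t=c/t$ and a careful choice of the burn-in cutoff $t_0$ that keeps the expansion factor $(T/t_0)^{\gamma c}$ from blowing up. By contrast, the transfer of smoothness and Lipschitzness through the model is already handled by Lemma~\ref{lemma:smooth}, and the concentration step is routine.
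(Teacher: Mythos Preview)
Your proposal is correct and matches the paper's approach essentially line for line: the paper likewise reduces to smoothness and Lipschitzness via Proposition~\ref{prop:3} and Lemma~\ref{lemma:smooth}, invokes the Hardt--Recht--Singer non-convex SGD stability bound (their Theorem~3.8, quoted as Theorem~A.4) to obtain $\beta(n)\lessapprox T^{\gamma c/(\gamma c+1)}/n$, and then plugs this into the Bousquet--Elisseeff-type stability-to-generalization theorem (quoted from Agarwal~2008 as Theorem~A.3) to get the stated high-probability bound. The only cosmetic difference is that you sketch the coupling/burn-in argument explicitly whereas the paper cites Hardt--Recht--Singer as a black box, and you cite Bousquet--Elisseeff directly whereas the paper routes through Agarwal's ordinal-regression specialization.
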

Note that in Theorem \ref{thm}, the bound does not depend on thresholds $\theta$. We ignore the dependency on $\epsilon$ and $\alpha$ as both are constants. Theorem \ref{thm} suggests that for any $T$, the generalization error after $T$ steps for SGD converges as the sample size of $S$ increases. For some proper step size, we can allow $T$ to increase with $n$. In particular, if $\gamma c < 2$, then $T$ may increase at a rate of $n$ while the generalization error still converges.

% \begin{theorem}(Generalization Error bound)
% The generalization error 
% Assume that we use smooth Ordinal regression losses mentioned in Lemma \ref{lemma:3} and smooth cosine similarity, then the generalization error goes to zero when the sample size goes to infinity, i.e. 
% \begin{align*}
%     & L^l_D(f_S, b_S) - L^l_S(f_S, b_S) \to 0.
% \end{align*} 
% \end{theorem}

\section{Experiments}
\label{sec:exp}

\paragraph{Datasets}
We use the publicly available large-scale Cross-Lingual Information Retrieval (CLIR) dataset from Wikipedia \citep{Sasaki2018} for our experiments. All the queries are in English, extracted as the first sentences from English pages, with title words removed. The \textit{Relevant} (MR) documents are the foreign-language pages having inter-language link to the English pages; the \textit{Partially Relevant} (SR) documents are those having mutual links to and from the relevant documents. Additionally, we randomly sample 40 other pages as \textit{Irrelevant} (NR) documents for each query. 
% All documents are limited to contain only first 200 words. 
To provide a comprehensive study, we use the document datasets of two high-resource languages: French (fr), Italian (it) and two low-resource languages: Swahili (sw), Tagalog (tl). Queries are randomly split into training, validation and testing set with the rate of 3:1:1. We include the data statistics in Table \ref{table:data}.

\begin{table}[!htb]
	\centering
	{\begin{tabular}{cccc} 
		\hline
		Language & \#Query &  \#SR/Q & \#Documents\\ \cline{1-4}
		French  &	25000 & 12.6 &  1894000\\ 
    % 	German &	25000 & 6.4 &  2091000 \\ 		
		Italian	&  25000&   11.7 & 1347000 \\  	
    	Swahili		&22793 &  1.5 & 37000 \\  
		Tagalog		& 25000 &  0.6 & 79000  \\  \cline{1-4}	
	\end{tabular}}
		\caption{We show the number of queries, the average number of Partially Relevant (SR) documents per query and the total number of documents of each language. }
\label{table:data}
\end{table}	

% \paragraph{Evaluation Metrics}
% The proposed method can be applied to ranking and classification tasks. To present results in both tasks, we divide metrics into the following two categories.
% \begin{itemize}
%     \item[(1)] \textbf{Classification metrics}: When only the relevance labels (MR, SR and NR) are of interest, the task can be viewed as a regular classification problem or an ordinal classification problem. However, the overall classification accuracy is not able to represent the actual performance in this particular task, due to the fact that labels are highly imbalanced. Instead, we report prediction accuracy for each class separately, which is also known as Recall, i.e. the percentage of correct labels over the true labels in each class. Besides, we also report Precision for MR and SR combined, which is the percentage of correctly label pairs over the number of data that are predicted as MR or SR.
%     \item[(2)] \textbf{Ranking metrics}: The goal of ranking metrics is to measure the performance of ranking results. We report commonly used ranking metrics in our results, such as Precision, Normalized Discounted Cumulative Gain (NDCG), Mean average precision (MAP) and Mean reciprocal rank (MRR). Note that recall is no longer a useful metric in Large-Scale Information Retrieval tasks, as extracting all the relevant documents is not of interest when the size of relevant documents could go up to thousands.

% \end{itemize}

\paragraph{Evaluation Metrics}  
We apply the commonly used ranking metrics in our experiments for evaluation, including Precision, NDCG, MAP and MRR. For each query, the corresponding documents are sorted by their relevance score, and the metrics are averaged over all queries. $P_{mr}@1$ represents the precision for MR document at top 1 position. $P_r@5$ represents the precision for MR and SR documents combined at top 5 positions. NDCG@5 is Normalized Discounted Cumulative Gain for top 5 documents. MAP is calculated as the mean of the average precision scores for each query. $MRR_{mr}$ is the Mean reciprocal rank of MR document.

\paragraph{Experimental Setup}

% For each of the 3 high-resources languages, we randomly sample 25,000 queries and split them into training set, validating set and testing test with a rate of 3:1:1. For Tagalog in low-resource languages, the sample process is same as high-resource languages. Given the limited size of Swahili, we use all the queries and randomly split them into 3 sets with the same rate of 3:1:1, which results in 13,676 queries in training set. Each query is paired with one MR document and 40 NR document (except in Section \ref{sec:neg}). The number of SR document varies from languages, see Table \ref{table:data}.
In our experiments, we use the pre-trained \textit{Polyglot} \citep{polyglot:2013:ACL-CoNLL} embeddings of dimension 64 as the initialization for the corresponding languages. 
These embeddings are fine-tuned during the training. We also observe improved performance by shuffling the training set.
We select the thresholds $\bm{\theta} = (0.2, 0.7)$ for SOSL, via cross-validation using grid searching. We use Adam optimizer in all experiments for optimization, with fixed learning rate 0.01. We set batch sizes as 128 and we stop after 30 epochs for all languages, resulting in about 310,000 training steps on French and 240,000 on Tagalog, while the other two languages' training steps are in-between. We release our code and experimental setup to benefit the community and promote researches along this topic\footnote{https://github.com/JiapengL/multi\_ling\_search}.

% for average pooling structure, and 0.001 initial learning rate 0.95 exponential decay for CNN and LSTM. For average pooling and CNN, batch sizes are 128 and we stopped after 30 epochs for all languages, results in about 310,000 training steps on French and 240,000 on Tagalog. For LSTM, batch sizes are 64 and we stopped after 15 epochs. -> move to appendix

% For the CNN model proposed by \cite{Shen2014a}, we used windows size as 3, i.e., word-3-gram, and 300 filters in convolutional layer. A maximum Pooling layer and a fully connected layer with output size 64 are stacked after the convolutional layer. In LSTM model proposed by \cite{Palangi2016}, we used bidirectional LSTM model with the hidden size to be 64, and concatenated the first hidden state for the forward LSTM and the last hidden state for the backward LSTM, as the output of LSTM layer, then followed by a fully connected layer with output size 64 as the final output. For the regularization of the CNN and LSTM model, we applied dropout \cite{srivastava2014dropout} on the word embedding layer with 0.4 dropout rate. We observed using separate models for query and documents can improve the performance than sharing the models, so two models with same structure are used for CNN and LSTM.

\renewcommand\arraystretch{1.1}	
\begin{table*}[!htb]
	\centering
	\scalebox{0.8}{
	{\begin{tabular}{ccccc ccccc} 
			\hline
			Language & Loss function &$P_{mr}@1$ & $P_{mr}@5$ & $P_r@5$ & $NDCG@5$ & MAP & $MRR_{mr}$ & $MRR_{r}$\\ \cline{1-9}
			\multirow{4}*{French}
			& $SOSL$ & $\mathbf{0.438}$ & $\mathbf{0.832}$ & $\mathbf{0.607}$ & $\mathbf{0.811}$ & $\mathbf{0.841}$ & $\mathbf{0.607}$ & $\mathbf{0.919}$  \\
            & $3Part_{l2}$ & 0.411 & 0.763 & 0.560 & 0.754 & 0.766 & 0.565 & 0.889\\	
            & MSE  &0.253 & 0.700 & 0.603 & 0.727 & 0.792 & 0.443 & 0.854\\
            & $PO_S$ & 0.254 & 0.704 & 0.604 & 0.729 & 0.795 & 0.445 & 0.856  \\  
			 \cline{1-9}

			\multirow{3}*{Italian}
			& $SOSL$& $\mathbf{0.401}$ & $\mathbf{0.791}$ & $\mathbf{0.614}$ & $\mathbf{0.798}$ & $\mathbf{0.838}$ & $\mathbf{0.568}$ & $\mathbf{0.912}$  \\ 
            & $3Part_{l2}$ & 0.385 & 0.748 & 0.572 & 0.751 & 0.768 & 0.545 & 0.883\\
            & MSE & 0.231 & 0.699 & 0.618 & 0.731 & 0.803 & 0.427 & 0.862\\	
            & $PO_S$& 0.232 & 0.705 & 0.619 & 0.734 & 0.806 & 0.430 & 0.863  \\  
			 \cline{1-9}	

			\multirow{3}*{Swahili}
			& $SOSL$ &  $\mathbf{0.600}$ & $\mathbf{0.916}$ & $\mathbf{0.341}$ & $\mathbf{0.793}$ & $\mathbf{0.776}$ & $\mathbf{0.735}$ & $\mathbf{0.833}$  \\ 	
            & $3Part_{l2}$ & 0.599 & 0.907 & 0.314 & 0.771 & 0.737 & 0.732 & 0.827\\
            & MSE & 0.351 & 0.851 & 0.360 & 0.724 & 0.738 & 0.558 & 0.771\\
            & $PO_S$ &  0.351 & 0.855 & 0.362 & 0.726 & 0.740 & 0.560 & 0.772 \\  
			 \cline{1-9}	

			\multirow{3}*{Tagalog}
			& $SOSL$ & $\mathbf{0.596}$ & $\mathbf{0.928}$ & $\mathbf{0.254}$ & $\mathbf{0.799}$ & $\mathbf{0.772}$ & $\mathbf{0.738}$ & $\mathbf{0.794}$  \\ 	           
			& $3Part_{l2}$ & 0.583 & 0.925 & 0.251 & 0.790 & 0.760 & 0.730 & 0.785\\
            & MSE & 0.455 & 0.892 & 0.251 & 0.732 & 0.709 & 0.639 & 0.726\\	
            & $PO_S$ & 0.463 & 0.896 & 0.252 & 0.737 & 0.715 & 0.645 & 0.730  \\  
			 \cline{1-9}				
			
	\end{tabular}}
	}
		\caption{This table shows results with different losses in different languages. The $\epsilon$ is 1 for all the losses.  }
\label{table:loss}
\end{table*}

% In all experiments except LSTM, we set batch size as 128 and use ADAM with learning rate 0.01, while the batchsize for LSTM is set as 64. In the CNN model proposed by \cite{Shen2014a}, we choose the windows size as 3, i.e., word-3-gram. The number of filters in convolutional layer is 300. A maximum Pooling layer and a fully connected layer with output size 128 are stacked after the convolutional layer. In LSTM model proposed by \cite{Palangi2016}, we use bidirectional LSTM model with the hidden size to be 128, followed by a fully connected layer with output size to be 128. 

% However, we did not observe any improvement by adding Batchnorm \cite{ioffe2015batch} layer after the average pooling layer, neither does Dropout \cite{srivastava2014dropout} help the performance of the model. 

\subsection{Results} \label{sec:res}
% In ranking problem, in terms of the MR document, we are more interested in whether it is ranked at the top, so we report Precision@1 ($P_{mr}@1$), Precision@5 ($P_{mr}@5$) and the inverse of the rank of the first MR document $MRR_{mr}$. Since there is only one MR document, we also include the results of MR documents and SR documents combined, i.e. Precision@5 ($P_{r}@1$) and $MRR_r$. 

\paragraph{SOSL and Other Loss Functions}

We first present the results of using different loss functions in different languages with smooth cosine similarity in Table \ref{table:loss}. We compared with three commonly used loss: Mean Square Error(MSE), Proportional Odds Loss (PO) \citep{mccullagh1980regression} and $3Part_{L2}$ \citep{Nigam2019}. The hyperparameter $\epsilon$ is fixed to 1 for all losses. We observe that SOSL outperforms  other loss functions in all languages and all metrics. We attribute this success to two folds: first, SOSL encourages smoothness over the optimization of parameters and thus guarantees convergence of generalization error; second, SOSL adds no penalty on the loss when the relevance score falls into the correct segmentation. Note that the performance in low-resource languages (\textit{sw} and \textit{tl}) is better
 than high-resource languages (\textit{fr} and \textit{it}) for some metrics, because the former languages have fewer SR documents and fewer total number of documents for each query, which makes it easier to distinguish MR and NR documents and is therefore more likely to rank the MR document on the top. 

To investigate the reason why SOSL performs the best, we plot the density curves of the relevance score predicted by each loss function on the training set, and illustrate them in Figure \ref{fig:plot}. All three classes are well separated in top left with SOSL. For 3partL2 loss, however, we can notice a large overlap in group NR and SR. In the plots of $PO$ and MSE losses, few MR documents are classified correctly. Besides, MR and SR documents are mixed together, with a portion of MR incorrectly classified as NR, which is unpreferable. This analysis showcases the power of SOSL to distinguish different types of documents at the end of the training.
% {\color{red} These results show why no penalty loss on correctly labeled data are better than MSE.}

\begin{figure}[!htb]
\centering
\includegraphics[height=0.15\textheight]{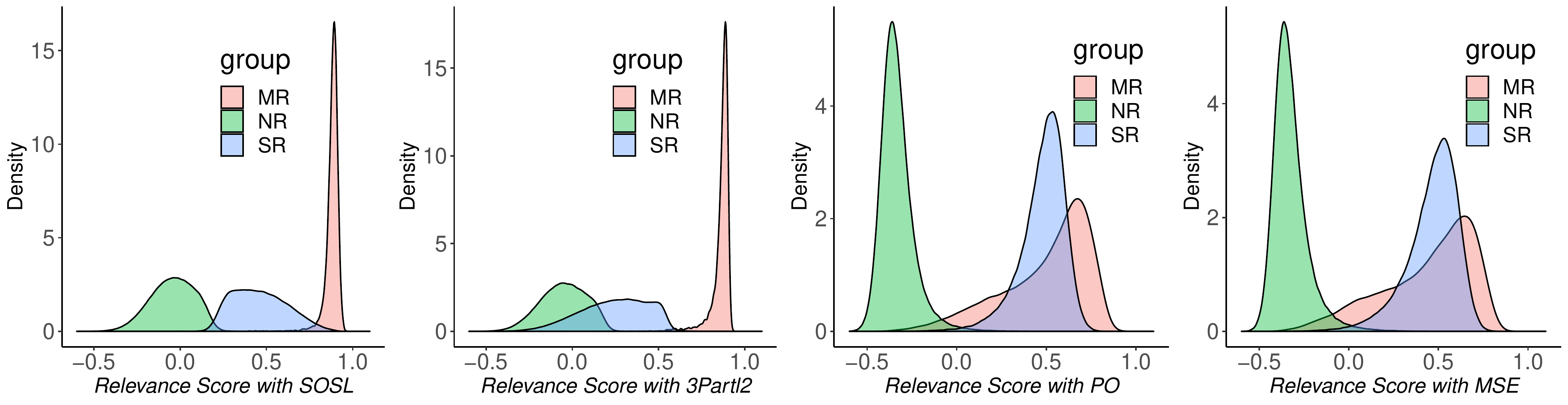}
\caption{The plots show the density of relevance score distribution with each loss on training set. In all losses $\epsilon = 1$ is applied.}
\label{fig:plot}
\end{figure}

\paragraph{Average Pooling and Other Neural Architectures}

We also compare the average pooling with other popular neural architectures. We choose DSSM-CNN model \citep{Shen2014a} and the DSSM-LSTM model \citep{Palangi2016}, which are widely used in information retrieval, to compare with. After tuning these two models, we specify their hyper-parameters as follows: we set 0.001 as the initial learning rate and 0.95 as the exponential decay rate for DSSM-CNN and DSSM-LSTM. We set the batch size 128 for DSSM-CNN and stop after 30 epochs and set the batch size 64 for DSSM-LSTM and stop after 15 epochs.

We set the window size as 3, i.e., word-3-gram for DSSM-CNN, with 300 filters in the convolutional layer. A maximum pooling layer and a fully connected layer with output size 64 are stacked after the convolutional layer. In the DSSM-LSTM model, we use the bidirectional LSTM model with the hidden units 64, and concatenate the first hidden state for the forward LSTM and the last hidden state for the backward LSTM, as the output of LSTM layer. This is then followed by a fully connected layer with output size 64 as the final output. To regularize the DSSM-CNN and the DSSM-LSTM model, we apply dropout \citep{srivastava2014dropout} on the word embedding layer with 0.4 dropout rate. We observe using separate modules for queries and documents can improve the performance comparing to sharing the same model, so two modules with same structure for queries and documents are used for CNN and LSTM.

% For the CNN model proposed by \cite{Shen2014a}, we used windows size as 3, i.e., word-3-gram, and 300 filters in convolutional layer. A maximum Pooling layer and a fully connected layer with output size 64 are stacked after the convolutional layer. In LSTM model proposed by \cite{Palangi2016}, we used bidirectional LSTM model with the hidden size to be 64, and concatenated the first hidden state for the forward LSTM and the last hidden state for the backward LSTM, as the output of LSTM layer, then followed by a fully connected layer with output size 64 as the final output. For the regularization of the CNN and LSTM model, we applied dropout \cite{srivastava2014dropout} on the word embedding layer with 0.4 dropout rate. We observed using separate models for query and documents can improve the performance than sharing the models, so two models with same structure are used for CNN and LSTM.

In Table \ref{table:structure}, we compare the results of the average pooling architecture with the DSSM-CNN model and the DSSM-LSTM model. Both the two original studies used the data type which only contains two classes, positive and negative, therefore they designed the binary loss function accordingly. For a fair comparison, all the neural architectures are followed by the same SOSL with the same thresholds $\bm{\theta}$ used in Table \ref{table:loss}. In all the languages and all the evaluation metrics, average pooling performs best among the three. We attribute the success of average pooling to two reasons: first, queries typically have fewer words than documents, and do not tend to have long-range dependencies; second, the original approaches only deal with two classes, lacking the flexibility of dealing with more classes. In addition, CNN and LSTM are more difficult to optimize, as they require more computational resources, training time, and tend to overfit. Our results are in accordance with the findings in \citet{Nigam2019}'s study.

\begin{table*}[!htb]
	\centering
	\scalebox{0.8}{

	{\begin{tabular}{ccccc ccccc} 
			\hline
			Language & Loss function &$P_{mr}@1$ & $P_{mr}@5$ & $P_r@5$ & $NDCG@5$ & MAP & $MRR_{mr}$ & $MRR_{r}$\\ \cline{1-9}
			\multirow{4}*{French}
			& Average Pooling & $\mathbf{0.438}$ & $\mathbf{0.832}$ & $\mathbf{0.607}$ & $\mathbf{0.811}$ & $\mathbf{0.841}$ & $\mathbf{0.607}$ & $\mathbf{0.919}$   \\ 
			& DSSM-CNN& 0.262 & 0.656 & 0.542 & 0.570 & 0.709 & 0.437 & 0.812  \\ 		
			& DSSM-LSTM &  0.335 & 0.718 & 0.560 & 0.716 & 0.748 & 0.503 & 0.846\\    \cline{1-9}
			
            \multirow{4}*{Italian}
			& Average Pooling & $\mathbf{0.401}$ & $\mathbf{0.791}$ & $\mathbf{0.614}$ & $\mathbf{0.798}$ & $\mathbf{0.838}$ & $\mathbf{0.568}$ & $\mathbf{0.912}$  \\  
			& DSSM-CNN& 0.217 & 0.620 & 0.547 & 0.656 & 0.709 & 0.394 & 0.805  \\ 		
			& DSSM-LSTM &  0.248 & 0.675 & 0.551 & 0.675 & 0.714 & 0.434 & 0.814  \\   \cline{1-9}
			
			\multirow{4}*{Tagalog}
			& Average Pooling & $\mathbf{0.596}$ & $\mathbf{0.928}$ & $\mathbf{0.254}$ & $\mathbf{0.799}$ & $\mathbf{0.772}$ & $\mathbf{0.738}$ & $\mathbf{0.794}$ \\ 
			& DSSM-CNN& 0.450 & 0.861 & 0.235 & 0.699 & 0.668 & 0.624 & 0.693  \\ 		
			& DSSM-LSTM &  0.526 & 0.907 & 0.242 & 0.748 & 0.709 & 0.688 & 0.736  \\    \cline{1-9}
            \multirow{4}*{Swahili}
			& Average Pooling & $\mathbf{0.600}$ & $\mathbf{0.916}$ & $\mathbf{0.341}$ & $\mathbf{0.793}$ & $\mathbf{0.776}$ & $\mathbf{0.735}$ & $\mathbf{0.833}$  \\ 
			& DSSM-CNN& 0.457 & 0.869 & 0.310 & 0.700 &  0.680 & 0.628 & 0.743  \\ 		
			& DSSM-LSTM & 0.527 & 0.897 & 0.319 & 0.738 & 0.713 & 0.685 & 0.772   \\    \cline{1-9}		
	\end{tabular}}
	}
		\caption{This table shows the results on different model structures in different document languages.}
\label{table:structure}
\end{table*}

\subsection{Impact of Hyper-parameters}
Smoothness factor $\epsilon$ determines the smoothness of the cosine similarity. When $\epsilon$ is 0, the loss function with respect to weights in \textit{neural} models is non-smooth and thus generalization cannot be guaranteed. To show the improvement of adding smoothness and illustrate the effect of $\epsilon$, 
we vary $\epsilon$ from 0 to 2 for SOSL, in document language French. For each $\epsilon$, we use grid search to find the best $\theta$. The results are visualized in Figure \ref{fig:ep}. It is observed that model performance is improved when the model becomes smooth. Besides, within a large range of $\epsilon$ (from 0.25 to 1.75), adding smoothness factor surpasses non-smooth cosine similarity. We see a concave curve in all metrics but the choice of $\epsilon$ is relatively non-sensible. We suggest taking $\epsilon$ within $[0.25, 1]$ for desired output and stability. 
\begin{figure}[!htb]
\centering
\includegraphics[height=0.1\textheight]{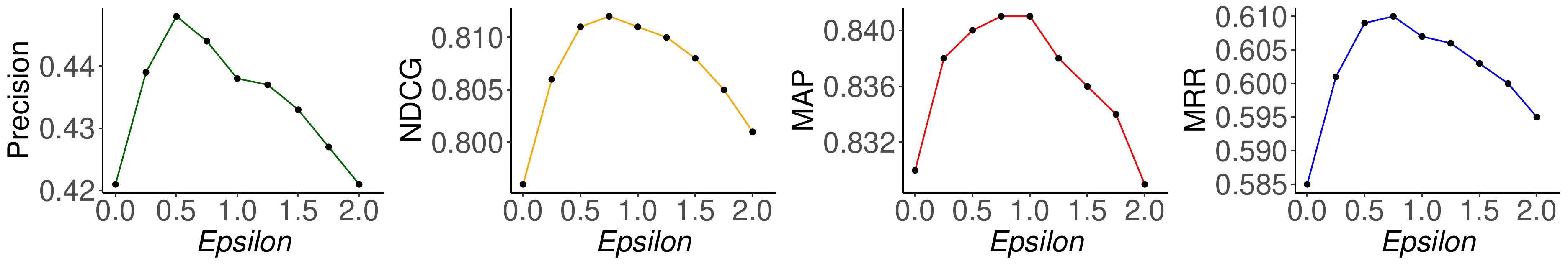}
\caption{These figures visualize the performance for SOSL with different $\epsilon$ in French. For each $\epsilon$, thresholds $\theta$ are chosen by grid search with the best results. Note that $\epsilon = 0$ is the case for non-smoothness. }
\label{fig:ep}
\end{figure}

\subsection{Different number of negative samples} \label{sec:neg}
In real industrial searching system, ranking is usually run after documents ``filtering'' process, e.g., the matching stage, which could greatly reduces the number of documents to be ranked. In Figure \ref{fig:neg}, we explore the effect of different number of \textit{irrelevant} (NR) documents. We create 9 different datasets where the number of NR documents per query varies from 20 to 100. The document language is the high-resource language French. We sampled same number of queries for training, validating and testing datasets as the experiments discussed earlier in this paper. The average number of SR documents per query varies but is still close to 12.6. The red curve indicates, as the number of NR documents increases, the data is more ``noisy'' and therefore more difficult to correctly rank and predict MR document. On the other hand, NDCG, MRR and MAP are relatively resistant to the increased number of NR documents since they only decrease about 10\% while the number of NR documents is $5\times$ more. This also validates the stability of our proposed framework.

\begin{figure}[!htb]
\centering
\includegraphics[width=0.5\textwidth]{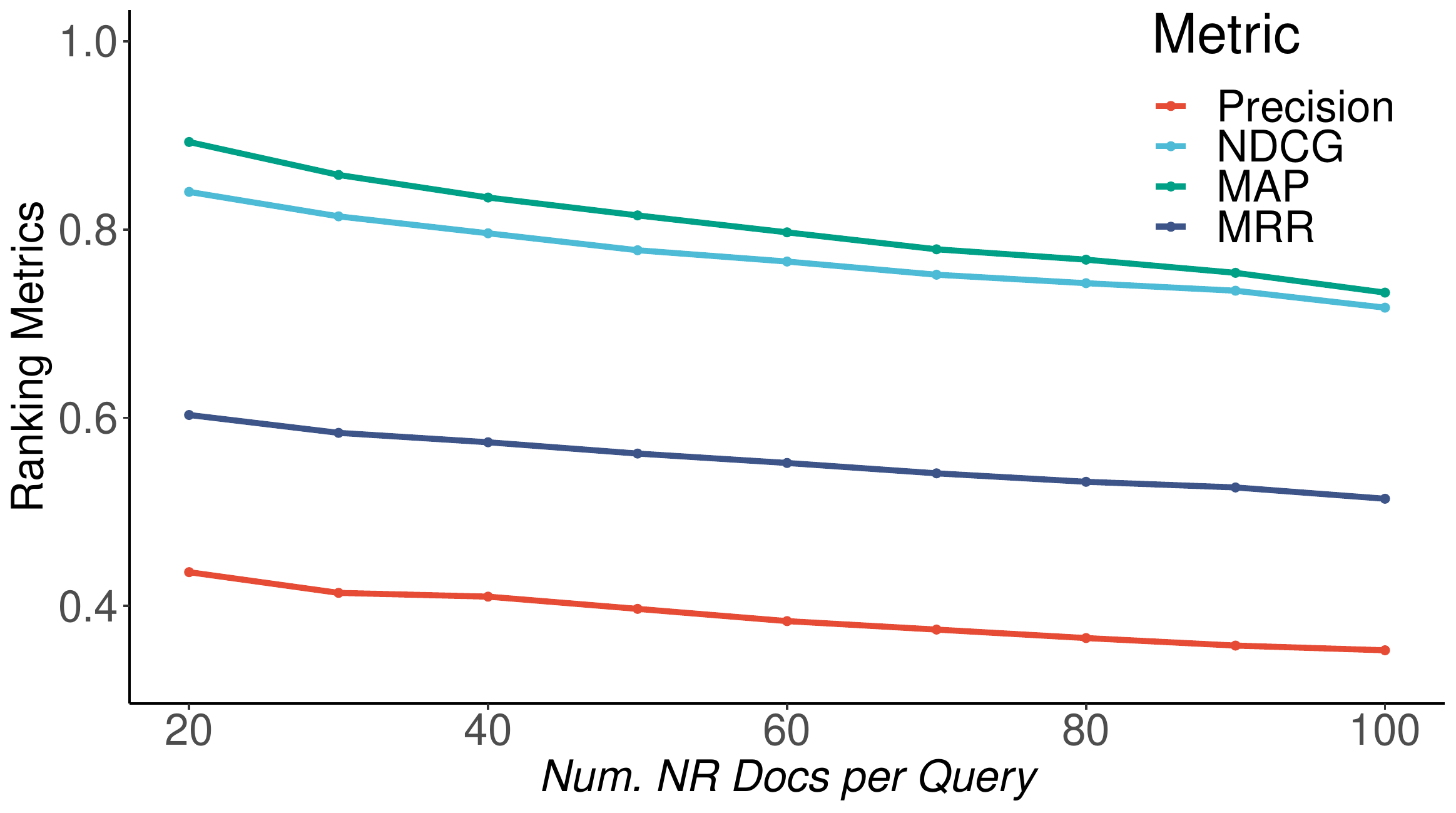}
\caption{The figure shows the relationship between evaluation metrics and the number of negative samples for each query in French. Precision and MRR representing $P_{mr}@1$ and $MRR_{mr}$ are discussed in Table \ref{table:loss}, respectively; NDCG is calculated from top 5 documents.}
\label{fig:neg}
\end{figure}

\section{Conclusion}
\label{sec:concl}
In this study, we propose a smooth learning framework for cross-lingual information retrieval task. We first suggest a novel measure of relevance between queries and documents, namely \textit{Smooth Cosine Similarity} (SCS), whose gradient is bounded thus able to avoid \textit{exploding gradients}, enforcing the model to be trained in a more stable way. Additionally, we propose a smooth loss function: \textit{Smooth Ordinal Search Loss} (SOSL), and provide theoretical guarantees on the generalization error bound for the whole proposed framework. Further, we conduct intensive experiments to compare our approach with existing document search models, and show significant improvements with commonly used ranking metrics on the cross-lingual document retrieval task in several languages. Both the theoretical and the empirical results imply the potentially wide application of this smooth learning framework.

\bibliographystyle{named}

\bibliography{mybib}

\begin{thebibliography}{}

\bibitem[\protect\citeauthoryear{Agarwal}{2008}]{agarwal2008generalization}
Shivani Agarwal.
\newblock Generalization bounds for some ordinal regression algorithms.
\newblock In {\em International Conference on Algorithmic Learning Theory},
  pages 7--21. Springer, 2008.

\bibitem[\protect\citeauthoryear{Al-Rfou \bgroup \em et al.\egroup
  }{2013}]{polyglot:2013:ACL-CoNLL}
Rami Al-Rfou, Bryan Perozzi, and Steven Skiena.
\newblock Polyglot: Distributed word representations for multilingual nlp.
\newblock In {\em Proceedings of the Seventeenth Conference on Computational
  Natural Language Learning}, pages 183--192, Sofia, Bulgaria, August 2013.
  Association for Computational Linguistics.

\bibitem[\protect\citeauthoryear{Bousquet and
  Elisseeff}{2002}]{bousquet2002stability}
Olivier Bousquet and Andr{\'e} Elisseeff.
\newblock Stability and generalization.
\newblock {\em Journal of machine learning research}, 2(Mar):499--526, 2002.

\bibitem[\protect\citeauthoryear{Bromley \bgroup \em et al.\egroup
  }{1993}]{bromley1993signature}
Jane Bromley, James~W Bentz, L{\'e}on Bottou, Isabelle Guyon, Yann LeCun, Cliff
  Moore, Eduard S{\"a}ckinger, and Roopak Shah.
\newblock Signature verification using a “siamese” time delay neural
  network.
\newblock {\em International Journal of Pattern Recognition and Artificial
  Intelligence}, 7(04):669--688, 1993.

\bibitem[\protect\citeauthoryear{Chapelle and Wu}{2010}]{chapelle2010gradient}
Olivier Chapelle and Mingrui Wu.
\newblock Gradient descent optimization of smoothed information retrieval
  metrics.
\newblock {\em Information retrieval}, 13(3):216--235, 2010.

\bibitem[\protect\citeauthoryear{Chapelle \bgroup \em et al.\egroup
  }{2011}]{chapelle2011future}
Olivier Chapelle, Yi~Chang, and Tie-Yan Liu.
\newblock Future directions in learning to rank.
\newblock In {\em Proceedings of the Learning to Rank Challenge}, pages
  91--100, 2011.

\bibitem[\protect\citeauthoryear{Deerwester \bgroup \em et al.\egroup
  }{1990}]{deerwester1990indexing}
Scott Deerwester, Susan~T Dumais, George~W Furnas, Thomas~K Landauer, and
  Richard Harshman.
\newblock Indexing by latent semantic analysis.
\newblock {\em Journal of the American society for information science},
  41(6):391--407, 1990.

\bibitem[\protect\citeauthoryear{Hardt \bgroup \em et al.\egroup
  }{2015}]{hardt2015train}
Moritz Hardt, Benjamin Recht, and Yoram Singer.
\newblock Train faster, generalize better: Stability of stochastic gradient
  descent.
\newblock {\em arXiv preprint arXiv:1509.01240}, 2015.

\bibitem[\protect\citeauthoryear{Huang \bgroup \em et al.\egroup
  }{2013}]{Huang2013}
Po-Sen Huang, Xiaodong He, Jianfeng Gao, Li~Deng, Alex Acero, and Larry Heck.
\newblock {Learning deep structured semantic models for web search using
  clickthrough data}.
\newblock In {\em Proc. of the Conference on Information and Knowledge
  Management (CIKM)}, 2013.

\bibitem[\protect\citeauthoryear{Lan \bgroup \em et al.\egroup
  }{2008}]{lan2008query}
Yanyan Lan, Tie-Yan Liu, Tao Qin, Zhiming Ma, and Hang Li.
\newblock Query-level stability and generalization in learning to rank.
\newblock In {\em Proceedings of the 25th international conference on Machine
  learning}, pages 512--519, 2008.

\bibitem[\protect\citeauthoryear{Lan \bgroup \em et al.\egroup
  }{2009}]{lan2009generalization}
Yanyan Lan, Tie-Yan Liu, Zhiming Ma, and Hang Li.
\newblock Generalization analysis of listwise learning-to-rank algorithms.
\newblock In {\em Proceedings of the 26th Annual International Conference on
  Machine Learning}, pages 577--584, 2009.

\bibitem[\protect\citeauthoryear{McCullagh}{1980}]{mccullagh1980regression}
Peter McCullagh.
\newblock Regression models for ordinal data.
\newblock {\em Journal of the Royal Statistical Society: Series B
  (Methodological)}, 42(2):109--127, 1980.

\bibitem[\protect\citeauthoryear{Mikolov \bgroup \em et al.\egroup
  }{2013}]{Mikolov2013}
Tomas Mikolov, Kai Chen, Greg Corrado, and Jeffrey Dean.
\newblock Distributed representations of words and phrases and their
  compositionality.
\newblock {\em Proc. of the Conference on Advances in Neural Information
  Processing Systems (NIPS)}, pages 1--9, 2013.

\bibitem[\protect\citeauthoryear{Nie}{2010}]{nie2010cross}
Jian-Yun Nie.
\newblock Cross-language information retrieval.
\newblock {\em Synthesis Lectures on Human Language Technologies}, 3(1):1--125,
  2010.

\bibitem[\protect\citeauthoryear{Nigam \bgroup \em et al.\egroup
  }{2019}]{Nigam2019}
Priyanka Nigam, Yiwei Song, Vijai Mohan, Vihan Lakshman, Weitian Ding, Choon
  {Hui Teo}, Hao Gu, Bing Yin, and Ankit Shingavi.
\newblock {Semantic Product Search}.
\newblock In {\em Proc. of the ACM SIGKDD Conference on Knowledge Discovery and
  Data Mining (KDD)}, 2019.

\bibitem[\protect\citeauthoryear{Palangi \bgroup \em et al.\egroup
  }{2016}]{Palangi2016}
Hamid Palangi, Li~Deng, Yelong Shen, Jianfeng Gao, Xiaodong He, Jianshu Chen,
  Xinying Song, and Rabab Ward.
\newblock {Deep Sentence embedding using long short-term memory networks:
  Analysis and application to information retrieval}.
\newblock {\em IEEE/ACM Transactions on Audio Speech and Language Processing},
  24(4):694--707, apr 2016.

\bibitem[\protect\citeauthoryear{Pennington \bgroup \em et al.\egroup
  }{2014}]{pennington-etal-2014-glove}
Jeffrey Pennington, Richard Socher, and Christopher Manning.
\newblock {G}lo{V}e: Global vectors for word representation.
\newblock In {\em Proc. of the Conference on Empirical Methods for Natural
  Language Processing (EMNLP)}, 2014.

\bibitem[\protect\citeauthoryear{Rennie and Srebro}{2005}]{rennie2005loss}
Jason~DM Rennie and Nathan Srebro.
\newblock Loss functions for preference levels: Regression with discrete
  ordered labels.
\newblock In {\em Proceedings of the IJCAI multidisciplinary workshop on
  advances in preference handling}, volume~1. Kluwer Norwell, MA, 2005.

\bibitem[\protect\citeauthoryear{Salakhutdinov and
  Hinton}{2009}]{Salakhutdinov2009}
Ruslan Salakhutdinov and Geoffrey Hinton.
\newblock {Semantic hashing}.
\newblock {\em International Journal of Approximate Reasoning}, 50(7):969--978,
  jul 2009.

\bibitem[\protect\citeauthoryear{Sasaki \bgroup \em et al.\egroup
  }{2018}]{Sasaki2018}
Shota Sasaki, Shuo Sun, Shigehiko Schamoni, Kevin Duh, and Kentaro Inui.
\newblock {Cross-Lingual Learning-to-Rank with Shared Representations}.
\newblock In {\em Proc. of the Annual Meeting of the North American Association
  of Computational Linguistics (NAACL)}, 2018.

\bibitem[\protect\citeauthoryear{Shen \bgroup \em et al.\egroup
  }{2014}]{Shen2014a}
Yelong Shen, Xiaodong He, Jianfeng Gao, Li~Deng, and Gr{\'{e}}goire Mesnil.
\newblock {A latent semantic model with convolutional-pooling structure for
  information retrieval}.
\newblock In {\em Proc. of the Conference on Information and Knowledge
  Management (CIKM)}, 2014.

\bibitem[\protect\citeauthoryear{Srivastava \bgroup \em et al.\egroup
  }{2014}]{srivastava2014dropout}
Nitish Srivastava, Geoffrey Hinton, Alex Krizhevsky, Ilya Sutskever, and Ruslan
  Salakhutdinov.
\newblock Dropout: a simple way to prevent neural networks from overfitting.
\newblock {\em The journal of machine learning research}, 15(1):1929--1958,
  2014.

\bibitem[\protect\citeauthoryear{Tewari and
  Chaudhuri}{2015}]{tewari2015generalization}
Ambuj Tewari and Sougata Chaudhuri.
\newblock Generalization error bounds for learning to rank: Does the length of
  document lists matter?
\newblock In {\em International Conference on Machine Learning}, pages
  315--323, 2015.

\bibitem[\protect\citeauthoryear{Zhou \bgroup \em et al.\egroup
  }{2012}]{zhou2012translation}
Dong Zhou, Mark Truran, Tim Brailsford, Vincent Wade, and Helen Ashman.
\newblock Translation techniques in cross-language information retrieval.
\newblock {\em ACM Computing Surveys (CSUR)}, 45(1):1--44, 2012.

\end{thebibliography}

\clearpage
\appendix

\section{Appendices}
\label{sec:appendix}
\renewcommand{\thesection}{A}
\setcounter{subsection}{0}
\renewcommand{\thesubsection}{A.\arabic{subsection}}
\setcounter{thm}{0}
\renewcommand{\thethm}{A.\arabic{thm}}

\subsection{Proof of Lemma \ref{lemma:smooth}}

\begin{definition}
A function $f: \Omega \to \mathbbm{R}$ is $L$-Lipschitz if for all $u, v \in \Omega$, we have
\begin{align*}
    |f(u) - f(v)| \leq L\|u-v\|.
\end{align*}
\end{definition}

\begin{lemma}\label{lemma:1}
If function $f$ and $g$ is $L_1$-Lipschitz and $L_2$-Lipschitz, then $g \circ f$ is $L_1L_2$-Lipschitz.
\end{lemma}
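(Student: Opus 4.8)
The plan is to apply the definition of Lipschitz continuity twice and chain the two resulting inequalities. First I would fix arbitrary points $u, v$ in the domain of $f$ and rewrite the quantity to be controlled as $|(g\circ f)(u) - (g\circ f)(v)| = |g(f(u)) - g(f(v))|$. Since $f(u)$ and $f(v)$ lie in the domain of $g$, the $L_2$-Lipschitz property of $g$ gives $|g(f(u)) - g(f(v))| \le L_2\,|f(u) - f(v)|$.

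Next I would bound the right-hand side using the $L_1$-Lipschitz property of $f$, namely $|f(u) - f(v)| \le L_1\,\|u - v\|$. Combining the two estimates yields $|(g\circ f)(u) - (g\circ f)(v)| \le L_1 L_2\,\|u - v\|$, and because $u, v$ were arbitrary this is exactly the statement that $g\circ f$ is $L_1 L_2$-Lipschitz. No fixed point, compactness, or regularity beyond the hypotheses is needed.

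The only thing demanding a moment of care is the bookkeeping on domains and norms: one must make sure the image of $f$ actually lands in the set on which $g$ is assumed Lipschitz, and that the metric used on the intermediate space (here $\mathbb{R}$, hence just the absolute value) is the one appearing in the Lipschitz estimate for $g$. In the present setting these match by construction, so there is no real obstacle — the lemma is a one-line composition of inequalities. It will then be used as a building block in the proof of Lemma~\ref{lemma:smooth}, combined with the analogous composition/product estimates for the smoothness constants and for the map $(\bm{q},\bm{d}) \mapsto r_\epsilon(\bm{q},\bm{d})$.
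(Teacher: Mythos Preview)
Your proposal is correct and matches the paper's approach: the paper simply states that the lemma follows directly from the definition of $L$-Lipschitz, which is precisely the two-step chaining of inequalities you wrote out. Your remark about using it later in the proof of Lemma~\ref{lemma:smooth} is also consistent with the paper.
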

Here we use $\circ$ to be the function composition opertor, i.e. $g \circ f(\cdot) = g(f(\cdot))$. It is easy to prove the Lemma by the definition of $L$-Lipschitz. 

\begin{definition}
A function $f: \Omega \to \mathbbm{R}$ is $\beta$-smooth if for all $u, v \in \Omega$, we have
\begin{align*}
    \|\nabla f(u) - \nabla f(v)\| \leq \beta \|u-v\|.
\end{align*}
\end{definition}

\begin{lemma}\label{lemma:2}
If function $f$ is $L_1$-Lipschitz and $\beta_1$-smooth, and $g$ is $L_2$-Lipschitz and $\beta_2$-smooth, then $g \circ f$ is $(L_1 \beta_2 + L_2 \beta_1)$-smooth.
\end{lemma}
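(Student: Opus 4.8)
The plan is to establish the $\beta$-smoothness of $h:=g\circ f$ directly from the definition, i.e.\ to bound $\|\nabla h(u)-\nabla h(v)\|$ by a constant multiple of $\|u-v\|$. First I would write down the gradient of the composition via the chain rule. In the vector-valued situation that the downstream argument actually needs (where $f=(f_q,f_d)$ bundles the query and document representation maps into a Euclidean space) this is $\nabla h(x)=J_f(x)^{\top}\nabla g(f(x))$ with $J_f$ the Jacobian of $f$; in the scalar case it is simply $\nabla h(x)=g'(f(x))\,\nabla f(x)$. Then I would split the difference of gradients at two points $u,v$ by the usual add-and-subtract step:
\begin{align*}
\nabla h(u)-\nabla h(v)=\big(g'(f(u))-g'(f(v))\big)\nabla f(u)+g'(f(v))\big(\nabla f(u)-\nabla f(v)\big).
\end{align*}

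Second, I would bound the two summands separately. For the second summand, $|g'(f(v))|\le L_2$ because $g$ is $L_2$-Lipschitz, and $\|\nabla f(u)-\nabla f(v)\|\le\beta_1\|u-v\|$ because $f$ is $\beta_1$-smooth, so it is at most $L_2\beta_1\|u-v\|$. For the first summand, note that $x\mapsto g'(f(x))$ is the composition of $g'$ — which is $\beta_2$-Lipschitz, this being exactly what $\beta_2$-smoothness of $g$ means — with $f$, which is $L_1$-Lipschitz; hence by Lemma~\ref{lemma:1} the map $x\mapsto g'(f(x))$ is $L_1\beta_2$-Lipschitz, so $|g'(f(u))-g'(f(v))|\le L_1\beta_2\|u-v\|$. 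Together with $\|\nabla f(u)\|\le L_1$ (from $L_1$-Lipschitzness of $f$, and using that for the pooled-$\tanh$ representations here this sup-norm factor does not inflate the constant), the first summand is controlled by $L_1\beta_2\|u-v\|$. Adding the two estimates and using the triangle inequality yields $\|\nabla h(u)-\nabla h(v)\|\le(L_1\beta_2+L_2\beta_1)\|u-v\|$, which is the claim.

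The genuinely delicate points, which I expect to be the main obstacle, are not the algebra above but the following: (a) $f$ is vector-valued, so ``$\beta$-smooth'' must be read with $\nabla f$ an operator (Jacobian) and one has to check operator-norm submultiplicativity so that the scalar estimates carry over verbatim; (b) the loss $l$ built from SOSL — which plays the role of $g$ — is only piecewise smooth because of the indicator functions in \eqref{eq:osl}, so $h$ fails to be differentiable on the measure-zero set where $r_\epsilon$ exactly meets a threshold $\theta_y$, and one must either verify that the one-sided gradients agree across each seam (the two quadratic branches and the flat middle branch meet with matching value and derivative) or phrase everything in terms of the a.e.-defined, globally Lipschitz gradient, which is all that the SGD stability argument behind Theorem~\ref{thm} requires; and (c) the bookkeeping that the constants $L_1,\beta_1$ attributed to $f$ here are with respect to the embedding parameters, since that is the variable over which the generalization bound is taken. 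Once these are handled, the smoothness constant of the composition follows from the add-and-subtract inequality as above.
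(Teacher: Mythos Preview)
Your proof is essentially the same as the paper's: write $\nabla(g\circ f)$ via the chain rule, insert an intermediate term to split $\nabla h(u)-\nabla h(v)$ by add-and-subtract, and bound the two pieces using the Lipschitz and smoothness constants of $f$ and $g$. The paper's proof is just that bare inequality chain, without your additional discussion of points (a)--(c), which concern the downstream application rather than the lemma itself.
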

\begin{proof}
We have the following inequality,
\begin{align*}
   & \|\nabla g \circ f(u) - \nabla g \circ f(v)\|\\
   = & \|\nabla g(f(u)) \nabla f(u) - \nabla g(f(v)) \nabla f(v) \| \\
     \leq &\|\nabla g(f(u)) \nabla f(u) - \nabla g(f(u)) \nabla f(v) \| + \\
    & \|\nabla g(f(u)) \nabla f(v) - \nabla g(f(v)) \nabla f(v) \| \\
    \leq & L_2 \beta_1 \|u - v\| + L_1 \beta_2 \|u-v\|.
\end{align*}
In the second inequality, we use the Lipschitz property that $\nabla g(f(u)) \leq L_2$ and $\nabla f(v) \leq L_1$.
\end{proof}

% \begin{prop}\label{lemma:3}
% Smooth Ordinal Search Loss is smooth and Lipschitz continuous.
% \end{prop}
\paragraph{Proof of Proposition \ref{prop:3}}
\begin{proof}
Let $K$ be the number of classes and $-1 = \theta_0 \leq \theta_1\leq ... \leq \theta_K = 1$. In Immediate Threshold with $L_2$ loss, 
\begin{align*}
    SOSL(r, y) = f_1(\theta_y - r) + f_1(r - \theta_{y-1}),
\end{align*}
where 
\[ f_1(x) = \begin{cases} 
      x^2 & x < 0 \\
      0 & x \geq 0
   \end{cases}.
\]
It is easy to see that 
\begin{align*}
    |\nabla SOSL(r, y)| \leq \max(2|1 + \theta_{y-1}|, 2|1-\theta_y|),
\end{align*}
for a given $y$ or
\begin{align*}
   & |\nabla SOSL(r, y)| 
    \leq \max_y(\max(2|1 + \theta_{y-1}|, 2|1-\theta_y|)) 
    \leq 4,
\end{align*}
for any $y$. Also 

\[ \nabla SOSL^2(r, y) = \begin{cases} 
      0 & r \in (\theta_{y-1}, \theta_y) \\
      2 & otherwise
   \end{cases}.
\]
for any $r$ and $y$, which means $|\nabla SOSL^2(r, y)| \leq 2$

\end{proof}

\paragraph{Proof of Lemma \ref{lemma:smooth}}
To have a high-level view of the proof, we decompose the loss function into three functions, and will show that each function is Lipschitz continuous and smooth with respect to its domain. 

\begin{proof}

Let $r(f) = \frac{f^T f_d}{(\|f\| + \epsilon)(\|f_d\|+\epsilon)}$ be a function of $f \in \mathbbm{R}^p$ when  $f_d \in \mathbbm{R}^p$ is fixed. For a fixed $\epsilon > 0$, the gradient of $\nabla r(f)$ exists, where 
\begin{align*}
    \nabla r(f) = & \frac{f_d^T}{(\|f\| + \epsilon)(\|f_d\|+\epsilon)} 
    - r(f)\frac{f^T }{(\|f\| + \epsilon)\|f\|} \\
   \|\nabla r(f) \| \leq & \Big{\|} \frac{f_d^T}{(\|f\| + \epsilon)(\|f_d\|+\epsilon)} \Big{\|} 
   + \Big{\|} r(f)\frac{f^T }{(\|f\| + \epsilon)\|f\|} \Big{\|}
    \leq   \frac{2}{(\|f\| + \epsilon)}  \leq  \frac{2}{\epsilon}
\end{align*}
To show $r(f)$ is also smooth, we calculate the Hessian matrix of $r(f)$. Let $M = (\|f\| + \epsilon)(\|f_d\|+\epsilon)$,
\begin{align*}
    \nabla^2 r(f) = &-\frac{f_d f^T}{M (\|f\| + \epsilon)\|f\|} 
    - \nabla r(f)^T \frac{f^T }{(\|f\| + \epsilon)\|f\|} 
     -r(f) \Big{[} \frac{\mathbbm{1}}{(\|f\| + \epsilon)\|f\|} - \frac{f (2f^T  + \epsilon \frac{f^T}{\|f\|})}{(\|f\| + \epsilon)^2\|f\|^2} \Big{]},
\end{align*}
where $\mathbbm{1}$ is a $p \times p$ identity matrix. It is easy to see that 
\begin{align*}
    \|\nabla^2 r(f)\| \leq \frac{N(p)}{\epsilon^2},
\end{align*}
where $N(p)$ is a constant only depends on $p$. Therefore, $r(f)$ is $2/\epsilon$-Lipschitz and $N(p)/\epsilon^2$-smooth. Similarly, this is also true with respect to $f_d$.

Deep Neural Networks normally do not enjoy Lipschitz continuity or smoothness due to their expressive power, which make it difficult to analyze its theoretical performance. We now show that our particular models are Lipschitz continuous and smooth for both query and document with respect to Embedding parameters. 

For a text query $Q$ with token length $t$, we represent it by $Q = (q_1, ..., q_t) \in \mathbbm{R}^{n_q \times t}$ where $q_i \in \mathbbm{R}^{n_q}$ be the one-hot vector with the index of $i$-th token in text $Q$ be 1 and the rest are 0. The output of query encoding networks $f_q$ is then

\begin{align*}
    f_q(Q) = &tanh(AvePooling(W_q^T Q)) \\
       = & tanh(\sum_{i=1}^t \frac{1}{t}W_q^T q_i)
\end{align*}
Note that hyperbolic tangent $tanh(x) = \frac{e^x - e^{-x}}{e^x + e^{-x}}$ is smooth, and $\sum_{i=1}^t \frac{1}{t}W_q^T q_i$ is linear in terms of $W_q$. We can claim that $f_q(Q)$ is smooth and Lipschitz continuous with respect to $W_q$.

Using Lemma \ref{lemma:1}, Lemma \ref{lemma:2} and Proposition \ref{prop:3}, we can prove Lemma \ref{lemma:smooth}.
\end{proof}

\subsection{Proof of Theorem \ref{thm}}
In the Ordinal Regression stage, let $x = (q, d)$ represent input query-document pair and let $y \in [K] = \{1, ..., K \}$ where $K$ is the number of relevance level. Given a sequence of samples $S = ((x_1, y_1), ..., (x_n, y_n)) \in (X \times [K])^n$, the goal is to learn a mapping $f : X \to [-1, 1]$ and a set of threshold $-1 = \theta^0 \leq \theta^1 \leq ... \leq \theta^{K-1} \leq \theta^K = 1$, then predicting the label based on the relevance score. Let $\theta = (\theta_1, ..., \theta_{K-1})$ and $E$ be a distribution on$(X, [K])$, also let $l(f, \theta, (x, y))$ be the error of a sample pair $(x, y)$ for a function $f$ and thresholds $\theta$, we define the expected error and empirical error as follows:
\begin{align}
    L^l_E(f, \theta) = \mathbf{E}_{(x, y)\sim E}[l(f, \theta, (x, y))]; \\
    L^l_S(f, \theta) = \frac{1}{m}\sum_{i=1}^m [l(f, \theta, (x_i, y_i))] .
\end{align}
We can prove the following Theorem from \cite{agarwal2008generalization}. 

\begin{thm}\label{thm:gen}
Let $\mathcal{A}$ be an ordinal regression algorithm which, given as input a training sample $S \in (X \times [K])^n$, learns a real-valued function $f_S: X \to \mathbbm{R}$ and a threshold vector $\theta_S \equiv (\theta_S^1, ..., \theta_S^{K-1})$. Let $l$ be any loss function in this setting such that $0 \leq l(f_S, \theta_S, (x, y)) \leq M$ for all training samples $S$ and all $(x, y) \in X \times [K]$, and let $\beta : \mathbbm{N} \to \mathbbm{R}$ be such that $\mathcal{A}$ has loss stability $\beta$ with respect to $l$. Then for any $0 < \delta < 1$ and for any distribution $D$ on $X \times [K]$, with probability at least $1- \delta$ over the draw of $S$, 
\begin{align*}
    &L^l_E(f_S, \theta_S) - L^l_S(f_S, \theta_S) \leq 
     2\beta(n) + (4n\beta(n) + M) \sqrt{\frac{\ln 1/ \delta}{2n}}
\end{align*}
\end{thm}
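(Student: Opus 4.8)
The plan is to prove Theorem~\ref{thm:gen} directly as a \emph{stability-implies-generalization} result, following the McDiarmid-style concentration argument of \citet{agarwal2008generalization}. The central object is the \emph{generalization gap} $\Phi(S) = L^l_E(f_S, \theta_S) - L^l_S(f_S, \theta_S)$, viewed as a function of the $n$ independent training examples comprising $S$. The strategy has two movements: first, bound the \emph{expectation} of $\Phi(S)$ over the draw of $S$ in terms of the stability parameter $\beta(n)$; second, show that $\Phi$ changes by only a controlled amount when any single training example is replaced, so that McDiarmid's bounded-differences inequality upgrades the expectation bound into a high-probability bound.

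First I would bound $\mathbf{E}_S[\Phi(S)]$. By linearity of expectation and the fact that the test point $(x,y)\sim E$ is independent of $S$, one rewrites $\mathbf{E}_S[L^l_E(f_S,\theta_S)]$ as an expectation over an i.i.d.\ draw of $n+1$ points where the learned predictor uses $n$ of them and is evaluated on the held-out one. A standard renaming (swapping the held-out point with the $i$-th training point, then averaging over $i$) lets me compare $L^l_E$ against $L^l_S$ example by example, and each comparison differs by at most the change in loss incurred when one training example is swapped out. This is exactly what the \emph{loss stability} $\beta$ controls, yielding $\mathbf{E}_S[\Phi(S)] \leq 2\beta(n)$, which accounts for the leading $2\beta(n)$ term in the stated bound.

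Next I would apply concentration. Let $S^{(i)}$ denote $S$ with its $i$-th example replaced by an independent copy. I must bound $|\Phi(S) - \Phi(S^{(i)})|$. The empirical term $L^l_S$ shifts by at most $M/n$ directly (one summand of $n$ changes, and $0\le l\le M$) plus the stability contribution from all $n$ summands whose predictor argument changed, giving an $O(\beta(n))$ piece; the expected term $L^l_E$ shifts by at most $\beta(n)$ since only the learned $(f_S,\theta_S)$ moves. Collecting these, $|\Phi(S)-\Phi(S^{(i)})|$ is bounded by a quantity of order $4n\beta(n)/n + M/n$ after the proper counting, which is the per-coordinate bounded-difference constant whose square, summed over $n$ coordinates, feeds McDiarmid's inequality. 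Setting the failure probability to $\delta$ and solving gives the $\sqrt{\ln(1/\delta)/(2n)}$ factor multiplying $(4n\beta(n)+M)$, completing the bound.

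The main obstacle I anticipate is the careful bookkeeping in the bounded-differences step, specifically tracking how the replacement of a single example perturbs \emph{both} the empirical average (through its direct summand and through the stability-induced shift of the predictor on every other summand) and the expected risk simultaneously, so that the constants assemble into exactly $4n\beta(n)+M$ rather than a looser expression. A secondary subtlety is confirming that the notion of \emph{loss stability} used here bounds changes in the loss function itself (not merely in the predictor), so that the argument applies uniformly over all $(x,y)\in X\times[K]$ as the hypothesis requires; this is where the boundedness $0\le l\le M$ and the smoothness established in Lemma~\ref{lemma:smooth} are jointly invoked.
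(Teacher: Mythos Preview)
Your proposal is essentially correct and follows the standard McDiarmid/bounded-differences route that underlies the result in \citet{agarwal2008generalization}. Note, however, that the paper does not actually prove Theorem~\ref{thm:gen}: it quotes the theorem directly from \citet{agarwal2008generalization} and only remarks that restricting the thresholds $\theta$ from $\mathbb{R}$ to $[-1,1]$ is trivial. So there is no ``paper's own proof'' to compare against here; you have reconstructed the argument behind the cited result, and your reconstruction is sound in outline (the expectation bound $\mathbf{E}_S[\Phi(S)]\le 2\beta(n)$ via the rename-and-swap trick, followed by a per-coordinate difference bound $c_i=(4n\beta(n)+M)/n$ feeding McDiarmid).

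One small correction: in your final paragraph you say that the boundedness $0\le l\le M$ and the smoothness from Lemma~\ref{lemma:smooth} are ``jointly invoked'' in establishing loss stability for this theorem. That is a conflation of roles. Theorem~\ref{thm:gen} is a generic stability-implies-generalization statement whose hypotheses are exactly (i) boundedness of $l$ and (ii) loss stability $\beta$; smoothness plays no part in its proof. Lemma~\ref{lemma:smooth} enters only later, in the proof of Theorem~\ref{thm}, where it is combined with the SGD stability result (Theorem~\ref{thm:sta}, from \citet{hardt2015train}) to \emph{verify} that the particular algorithm at hand satisfies the loss-stability hypothesis of Theorem~\ref{thm:gen}. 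Keep those two steps separate.
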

Note that the original theorem in \cite{agarwal2008generalization} is defined with $\theta$ over the real line $\mathbbm{R}$, but it is trivial to apply to $\theta$ defined over $[-1, 1]$. It is also easy to verify that $0 \leq l(f_S, \theta_S, (x, y)) \leq M$ since our Ordinal Regression loss is bounded. If $\beta(n)$ has a rate of $1/n$, then the generalization bound $L^l_E(f_S, \theta_S) - L^l_S(f_S, \theta_S)$ goes to zero as $n$ goes to infinity. Next, we will give the definition of stability $\beta$ and to show that DNNs optimized by SGD satisfy this requirement.  

\subsection{Stability}
The Uniform Stability is a measurement of how the algorithm will be affected if removing one sample from the training set. Let $S = ((x_1, y_1), ..., (x_n, y_n))$ be the training set and $S^{\setminus i}$ represents the set by removing the $i$-th element from $S$  
\begin{align*}
    S^{\setminus i} = &((x_1, y_1), ..., (x_{i-1}, y_{i-1}), 
    (x_{i+1}, y_{i+1}), ..., (x_m, y_m)).
\end{align*}The formal definition is first proposed by \cite{bousquet2002stability} and stated as follows,
\begin{definition}{(Uniform Stability)}
An algorithm $\mathcal{A}$ has uniform stability $\beta$ with respect to the loss function $l$ if the following holds
\begin{align*}
    &\forall S \in X^m, \forall i \in \{1, ..., m\},
    \|l(A_S, \cdot) - l(A_{S^{\setminus i}}, \cdot)\|_{\infty} \leq \beta.
\end{align*}
\end{definition}

Remember that our goal is to learn a function $f: X \to [-1, 1]$ such that $f(x) = \textit{cos}(f_q(q), f_d(d)) \in [-1, 1]$ which is parameterized by DNNs, and optimized by SGD. The following theorem (\cite{hardt2015train}, Theorem 3.8) shows the stability bound for convex loss minimization via SGD,

\begin{thm}\label{thm:sta}
Assume that the loss function $l(\cdot; y) \in [0, 1]$ is $\gamma$-smooth and $L$-Lipschitz for every $y$. Suppose that we run SGD for $T$ steps with monotonically non-increasing step sizes $\alpha_t \leq c/t$. Then SGD satisfies uniform stability with 
\begin{align*}
    \beta_{stab} \leq \frac{1 + 1/\gamma c}{n-1}(2cL^2)^{\frac{1}{\gamma c + 1}}T^{\frac{\gamma c}{\gamma c +1}}.
\end{align*}
\end{thm}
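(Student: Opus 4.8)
The plan is to bound the uniform stability $\beta_{stab}$ by controlling how far two SGD trajectories drift apart when the training set is perturbed in a single example, and then to convert that parameter drift into a loss gap via the Lipschitz assumption. Concretely, let $S$ and $\tilde S$ be two training sets differing in exactly one example, and couple the two runs of SGD so that at every step the \emph{same} random index is drawn in both. Write $w_t, w_t'$ for the resulting iterates and $\delta_t = \|w_t - w_t'\|$. Since $l(\cdot;y)$ is $L$-Lipschitz, for any test point $z$ we have $|l(w_T;z) - l(w_T';z)| \le L\,\delta_T$, so it suffices to bound $\mathbf{E}[\delta_T]$, the expectation being over the random index sequence of SGD.

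First I would establish the \emph{growth} estimates for a single update $G_t(w) = w - \alpha_t \nabla l(w; z)$. Using only $\gamma$-smoothness (and \emph{not} convexity, since the $T^{\gamma c/(\gamma c+1)}$ factor is the non-convex rate) one checks that $G_t$ is $(1+\alpha_t\gamma)$-expansive, i.e. $\|G_t(w) - G_t(v)\| \le (1+\alpha_t\gamma)\|w-v\|$. At each SGD step there are two cases: with probability $1-1/n$ the drawn index avoids the perturbed example, both runs apply the \emph{same} $G_t$, and $\delta_{t+1} \le (1+\alpha_t\gamma)\delta_t$; with probability $1/n$ the perturbed example is drawn, and bounding both gradients by $L$ via the Lipschitz property gives $\delta_{t+1} \le \delta_t + 2\alpha_t L$. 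Taking expectations over the step's random index yields the recursion
\begin{align*}
    \mathbf{E}[\delta_{t+1}] \le \big(1 + \alpha_t\gamma\big)\mathbf{E}[\delta_t] + \frac{2\alpha_t L}{n}.
\end{align*}

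The subtle point, and what I expect to be the main obstacle, is that this recursion is \emph{expansive} ($1+\alpha_t\gamma > 1$), so a naive unrolling blows up exponentially in $T$. The fix is a burn-in argument exploiting the boundedness $l\in[0,1]$: fix a cutoff $t_0$ and condition on the event that the perturbed example is not sampled during the first $t_0$ steps (so $\delta_{t_0}=0$). On the complementary event, which has probability at most $t_0/n$, I bound the loss gap crudely by $\sup|l|\le 1$, giving
\begin{align*}
    \beta_{stab} \le \frac{t_0}{n} + L\,\mathbf{E}[\delta_T \mid \delta_{t_0}=0].
\end{align*}
I would then unroll the recursion from $t_0$, substitute $\alpha_t \le c/t$, use $1+x \le e^x$ together with $\sum_{k=t+1}^{T} 1/k \le \log(T/t)$ to turn the product of expansion factors into $(T/t)^{\gamma c}$, and bound the resulting sum by an integral to obtain $\mathbf{E}[\delta_T\mid\delta_{t_0}=0] \lesssim \frac{2L}{\gamma n}(T/t_0)^{\gamma c}$.

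Finally I would optimize the free parameter $t_0$. Minimizing $\frac{t_0}{n} + \frac{2L^2}{\gamma n}(T/t_0)^{\gamma c}$ over $t_0$ gives the balancing choice $t_0 = (2cL^2)^{1/(\gamma c+1)} T^{\gamma c/(\gamma c+1)}$, at which the two terms stand in ratio $1 : 1/\gamma c$ and their sum collapses to $\frac{1+1/\gamma c}{n}(2cL^2)^{1/(\gamma c+1)} T^{\gamma c/(\gamma c+1)}$, matching the claimed bound, with the $n-1$ in place of $n$ coming from the exact leave-one-out accounting in the stability definition. The only genuinely delicate steps are the conditioning/boundedness trade-off and the verification of the $t_0$-optimization; both become routine once the expansive recursion above is in hand.
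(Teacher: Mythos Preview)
Your outline is correct and is precisely the argument of Hardt, Recht, and Singer (2015), Theorem~3.12: the coupling of two SGD runs, the $(1+\alpha_t\gamma)$-expansiveness of the smooth (non-convex) gradient step, the burn-in at $t_0$ using $l\in[0,1]$, the unrolling via $\prod_{k>t}(1+\gamma c/k)\le (T/t)^{\gamma c}$, and the optimization over $t_0$ all match. Note, however, that the paper does \emph{not} give its own proof of this statement: it is quoted verbatim as a cited result (``\cite{hardt2015train}, Theorem~3.8''), and is then combined with Lemma~\ref{lemma:smooth} and Theorem~\ref{thm:gen} to obtain Theorem~\ref{thm}. So there is nothing to compare against in the paper beyond the citation; your proposal simply reproduces the referenced proof.
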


\paragraph{Proof of Theorem \ref{thm}}
By combining Lemma \ref{lemma:smooth}, Theorem \ref{thm:gen} and Theorem \ref{thm:sta}, we can prove Theorem \ref{thm}.

\end{document}